\algrenewcommand\algorithmicrequire{\textbf{Input:}}
\algrenewcommand\algorithmicensure{\textbf{Output:}}
\definecolor{azure}{rgb}{0.0, 0.5, 1.0}
\newcommand{\Ptask}{\mathcal{D}_{{\sf task}}}
\newcommand{\Puser}{\mathcal{D}}
\newcommand{\pRef}{p_{{\sf ref}}}
\newcommand{\ft}{{{\sf FT}}}
\newcommand{\myparagraph}[1]{\noindent{\textbf{#1}.}}  
\newcommand \Pcal {\mathcal{P}}
\newcommand \xv {{\bm{x}}}
\newcommand \prob {\mathbb{P}}
\newcommand \expect {\mathbb{E}}
\newcommand{\kl}{{\mathrm{KL}}}
\newcommand \pow [1]{^{(#1)}}
\DeclarePairedDelimiterX{\inp}[2]{\langle}{\rangle}{#1, #2} 
\DeclarePairedDelimiterX{\norm}[1]{\Vert}{\Vert}{#1} 
\DeclarePairedDelimiterX{\normsq}[1]{\Vert}{\Vert^2}{#1} 
\renewcommand \epsilon \varepsilon
\newcommand{\eps}{\ensuremath{\epsilon}}
\newtheorem{theorem}{Theorem}
\newtheorem{proposition}[theorem]{Proposition}
\theoremstyle{definition}
\definecolor{C0}{HTML}{1F77B4}
\definecolor{C1}{HTML}{ff7f0e}
\definecolor{C2}{HTML}{2ca02c}
\definecolor{C3}{HTML}{d62728}
\definecolor{C4}{HTML}{9467bd}
\newcommand{\tabemph}[1]{\cellcolor{C1!10}\textcolor{black!90}{#1}}%
\renewcommand{\cite}{\citep}
\newcommand{\krev}[1]{#1}
\newcounter{arxiv}
\renewcommand{\myparagraph}[1]{\paragraph{#1.}\hspace{-0.8em}}  
\title{\textbf{User Inference Attacks on Large Language Models}}
\author{
\begin{tabular}{ccc}
Nikhil Kandpal$^{1}$
&
Krishna Pillutla$^{2}$
&
Alina Oprea$^{2,3}$
\\
Peter Kairouz$^{2}$
&
Christopher A. Choquette-Choo$^{2}$
&
Zheng Xu$^{2}$
\end{tabular}
\vspace{0.5em}
\\
{\small $^1$University of Toronto \& Vector Institute $\qquad$ $^2$Google $\qquad$ $^3$Northeastern University }
}
\date{\vspace{-2em}}
\begin{document}

\maketitle
\doparttoc 
\faketableofcontents 

\begin{abstract}
Fine-tuning is a common and effective method for tailoring large language models (LLMs) to specialized tasks and applications. 
In this paper, we study the privacy implications of fine-tuning LLMs on user data. To this end, we 
\krev{consider}
a realistic threat model, called \emph{user inference}, wherein an attacker infers whether or not a \emph{user's} data was used for fine-tuning. 
We design attacks for performing user inference that require only black-box access to the fine-tuned LLM and a few samples from a user which need not be from the fine-tuning dataset. We find that LLMs are susceptible to user inference across a variety of fine-tuning datasets, at times with near perfect attack success rates. 
Further, we theoretically and empirically investigate the properties that make users vulnerable to user inference, finding that outlier users, users with identifiable shared features between examples, and users that contribute a large fraction of the fine-tuning data are most susceptible to attack.
Based on these findings, we identify several methods for mitigating user inference  including training with example-level differential privacy, removing within-user duplicate examples, and reducing a user's contribution to the training data. While these techniques provide partial mitigation of user inference, we highlight the need to develop methods to fully protect fine-tuned LLMs against this privacy risk.  
\end{abstract}

\section{Introduction}
Successfully applying large language models (LLMs) to real-world problems is often best achieved by fine-tuning on domain-specific data \cite{liu2022fewshot,Mosbach2023FewshotFV}. This approach is seen in a variety of commercial products deployed today, e.g., GitHub Copilot~\cite{chen2021evaluating}, Gmail Smart Compose~\cite{dai2019gmail}, GBoard~\cite{xu2023federated}, etc., that are based on LLMs trained or fine-tuned on domain-specific data collected from users. The practice of fine-tuning on user data---particularly on sensitive data like emails, texts, or source code---comes with privacy concerns, as LLMs have been shown to leak information from their training data~\cite{Carlini_Memorization}, especially as models are scaled larger \cite{carlini2023quantifying}. In this paper, we study the privacy risks posed to users whose data are leveraged to fine-tune LLMs.


Most existing privacy attacks on LLMs can be grouped into two categories: \emph{membership inference}, in which the attacker obtains access to a sample and must determine if it was trained on~\cite{mireshghallah-etal-2022-quantifying,mattern-etal-2023-membership,code_leakage}; and \emph{extraction attacks}, in which the attacker tries to reconstruct the training data by prompting the model with different prefixes~\cite{Carlini_Memorization,pii_leakage}.
These threat models make no assumptions about the origin of the training data and thus cannot estimate the privacy risk to a user that contributes many training samples that share characteristics (e.g., topic, writing style, etc.). 
To this end, we consider the threat model of user inference~\cite{miao2021audio,Hartmann_DistrInf_2023} for the first time for LLMs. \emph{We show that user inference is a realistic privacy attack for LLMs fine-tuned on user data.}

\begin{figure*}
\centering
\includegraphics[width=0.95\linewidth]{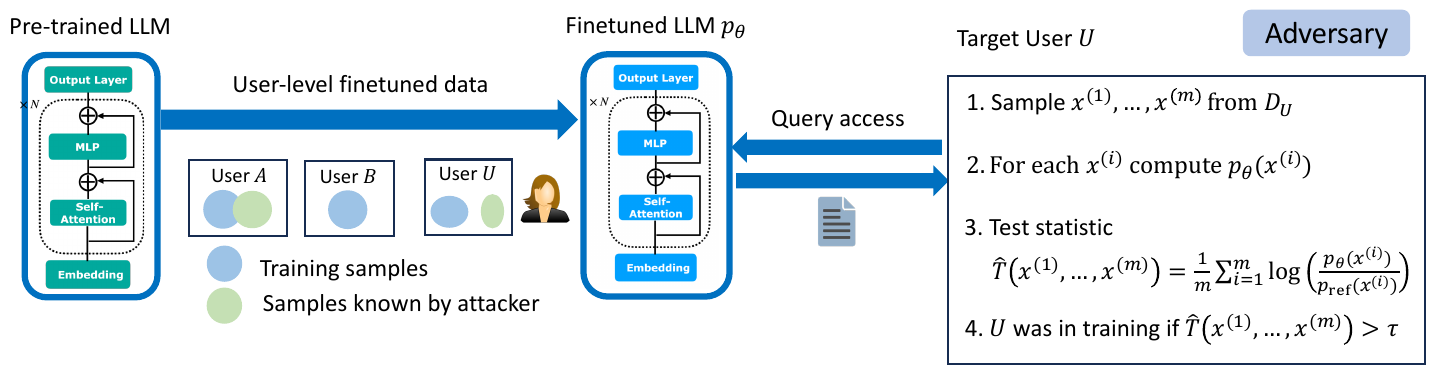}
\caption{The user inference threat model. An LLM is fine-tuned on user-stratified data. The  adversary can query samples on the fine-tuned model to compute likelihoods. The adversary can access samples from a user's distribution (different than the user training samples) to compute a likelihood score to determine if the user participated in training.} \label{fig:attack_overview}
\end{figure*}

In user inference (see \Cref{fig:attack_overview}), the attacker aims to determine if a particular user participated in LLM fine-tuning using only a few \emph{fresh} samples from the user and black-box access to the fine-tuned model.
This threat model lifts membership inference from the privacy of individual samples to the privacy of users who contribute multiple samples, while also relaxing the stringent assumption that the attacker has access to the exact fine-tuning data.
By itself, user inference could be a privacy threat if the fine-tuning task reveals sensitive information about participating users (e.g., a model is fine-tuned only on users with a rare disease). Moreover, user inference may also enable other attacks extracting sensitive information about specific users, similar to how membership inference is used as a subroutine in training data extraction attacks~\cite{Carlini_Memorization}. 

In this work, we construct a simple and practical user inference attack that determines if a user participated in LLM fine-tuning. It involves computing a likelihood ratio test statistic normalized relative to a reference model (\Cref{sec:attack}). This attack can be efficiently mounted even at the LLM scale. We empirically study its effectiveness on the GPT-Neo family of LLMs~\cite{gpt-neo} when fine-tuned on diverse data domains, including emails, social media comments, and news articles (\Cref{sec:results}). 
This study gives insight into the various parameters that affect vulnerability to user inference---such as uniqueness of a user's data distribution, amount of fine-tuning data contributed by a user, and amount of attacker knowledge about a user.

We evaluate the attack on synthetically generated canary users to characterize the privacy leakage for worst-case users (\Cref{sec:canaries}).
We show that canary users constructed via minimal modifications to the real users' data increase the attack's effectiveness (in AUROC) by up to $40\%$. This indicates that simple features shared across a user's samples like an email signature or a characteristic phrase, can greatly exacerbate the risk of user inference.


Finally, we evaluate several methods for mitigating user inference, such as limiting the number of fine-tuning samples contributed by each user, removing duplicates within a user's samples, early stopping, gradient clipping, and fine-tuning with example-level differential privacy (DP). Our results show that duplicates within a user's examples can exacerbate the risk of user inference, but are not necessary for a successful attack. Additionally, limiting a user's contribution to the fine-tuning set can be effective but is only feasible for data-rich applications with a large number of users. Finally, example-level DP provides some defense but is ultimately designed to protect the privacy of individual examples, rather than users that contribute multiple examples. These results highlight the importance of future work on scalable \emph{user-level} DP algorithms that have the potential to provably mitigate user inference~\cite{mcmahan2017learning,levy2021learning}. Overall, we are the first to study user inference against LLMs and provide key insights to inform future deployments of LLMs fine-tuned on user data. 
\section{Related Work}

There are many different ML privacy attacks with different objectives~\cite{NIST_report}: \emph{membership inference} attacks determine if a particular data sample was part of a model's training set~\cite{shokri2017membership,yeom_mi,LiRA,enhanced_mi,watson2022importance,choquette2021label,jagielski2023students}; \emph{data reconstruction} aims to exactly reconstruct the training data of a model, typically for a discriminative model~\cite{haim2022reconstructing}; and \emph{data extraction} attacks aim to extract training data from generative models like LLMs~\cite{Carlini_Memorization,pii_leakage,ippolito2022preventing,anil2023palm,kudugunta2023madlad,nasr2023scalable}.

\myparagraph{Membership inference attacks on LLMs}
\citet{mireshghallah-etal-2022-quantifying} introduce a likelihood ratio-based attack on LLMs, designed for masked language models, such as BERT. \citet{mattern-etal-2023-membership} compare the likelihood of a sample against the average likelihood of a set of neighboring samples, and eliminate the assumption of attacker knowledge of the training distribution used in prior works.
\citet{debenedetti2023privacy} study how systems built on LLMs may amplify membership inference.
\citet{Carlini_Memorization} use a perplexity-based membership inference attack to extract training data from GPT-2. Their attack prompts the LLM to generate sequences of text, and then uses membership inference to identify sequences copied from the training set. 
Note that membership inference requires access to exact training samples while user inference does not.

\myparagraph{Extraction attacks}
Following~\citet{Carlini_Memorization}, memorization in LLMs received much attention~\cite{zhang2021counterfactual,NEURIPS2022_fa0509f4,biderman2023emergent,anil2023palm}. These works found that memorization scales with model size~\cite{carlini2023quantifying} and data repetition \cite{kandpal2022deduplicating}, may eventually be forgotten~\cite{jagielski2022measuring}, and can exist even on models trained for specific restricted use-cases like translation~\cite{kudugunta2023madlad}. 
\citet{pii_leakage} develop techniques to extract PII information from LLMs and  \cite{inan2021training} design metrics to measure how much of user's confidential data is leaked by the LLM.
Once a user's participation is identified by user inference, these techniques can be used to estimate the amount of privacy leakage.

\myparagraph{User-level membership inference}
Much prior work on inferring a user's participation in training makes the stronger assumption that the attacker has access to a user's exact training samples. We call this \emph{user-level membership inference} to distinguish it from \emph{user inference} (which does not require access to the exact training samples).
\citet{song2019auditing} give the first such an attack for generative text models.
Their attack is based on training multiple shadow models and does not scale to LLMs. This threat model has also been studied for text classification via reduction to membership inference~\citep{shejwalkar2021membership}.

\myparagraph{User inference} This threat model was considered for
speech recognition in IoT devices~\cite{miao2021audio}, representation learning~\cite{li2022userlevel} and face recognition~\cite{face_auditor}. 
\citet{Hartmann_DistrInf_2023} formally define user inference for classification and regression but call it \emph{distributional membership inference}.
These attacks are domain-specific or require shadow models. Thus, they do not apply or scale to LLMs. Instead, we design an efficient user inference attack that scales to LLMs and illustrate the user-level privacy risks posed by fine-tuning on user data. See \Cref{sec:a:related} for further discussion.


\section{User Inference Attacks} \label{sec:attack}
Consider an autoregressive language model $p_\theta$ that defines a distribution $p_\theta(x_t | \xv_{<t})$ over the next token $x_t$ in continuation of a prefix $\xv_{<t} \doteq (x_1, \ldots, x_{t-1})$.
We are interested in a setting where a pretrained LLM $p_{\theta_0}$ with initial parameters $\theta_0$ is fine-tuned on a dataset $D_\ft$ sampled i.i.d. from a distribution $\Ptask$. 
The most common objective is to minimize the cross entropy of predicting each next token $x_t$ given the context $\xv_{<t}$ for each fine-tuning sample $\xv \in D_\ft$.
Thus, the fine-tuned model $p_\theta$ is trained to maximize the log-likelihood $\sum_{\xv \in D_\ft} \log p_\theta(\xv) = \sum_{\xv \in D_\ft} \sum_{t=1}^{|\xv|} \log p_\theta(x_t | \xv_{<t})$ of the fine-tuning set $D_\ft$.

\myparagraph{Fine-tuning with user-stratified data}
Much of the data used to fine-tune LLMs has a user-level structure. For example, emails, messages, and blog posts can reflect the specific characteristics of their author. Two text samples from the same user are more likely to be similar to each other than samples across users in terms of language use, vocabulary, context, and topics.
To capture user-stratification, we model the fine-tuning distribution $\Ptask$ as a mixture
\begin{align} \label{eq:task-distr}
\textstyle
   \Ptask = \sum_{u=1}^n \alpha_u \Puser_u
\end{align}
of $n$ user data distributions $\Puser_1, \ldots, \Puser_n$ with non-negative weights $\alpha_1,\ldots, \alpha_n$ that sum to one. 
One can sample from $\Ptask$ by first sampling a user $u$ with probability $\alpha_u$ and then sampling a document $\xv \sim \Puser_u$ from the user's data distribution. We note that the fine-tuning process of the LLM is oblivious to user-stratification of the data.

\myparagraph{The user inference threat model}
The task of membership inference assumes that an attacker has access to a text sample $\xv$ and must determine whether that particular sample was a part of the training or fine-tuning data~\cite{shokri2017membership,yeom_mi,LiRA}. The \textbf{user inference} threat model relaxes the assumption that the attacker has access to samples from the fine-tuning data.

The attacker aims to determine if \emph{any} data from user $u$ was involved in fine-tuning the model $p_\theta$ using $m$ i.i.d. samples $\xv\pow{1:m} := (\xv\pow{1}, \ldots, \xv\pow{m}) \sim \Puser_u^m$ from user $u$'s distribution. Crucially, we allow $\xv\pow{i} \notin D_\ft$, i.e., the attacker is not assumed to have access to the exact samples of user $u$ that were a part of the fine-tuning set. For instance, if an LLM is fine-tuned on user emails, the attacker can reasonably be assumed to have access to \emph{some} emails from a user, but not necessarily the ones used to fine-tune the model. We believe this is a realistic threat model for LLMs, as it does not require exact knowledge of training set samples, as in membership inference attacks. 

We assume that the attacker has \emph{black-box access} to the LLM $p_\theta$ --- they can only query the model's likelihood on a sequence of tokens and might not have knowledge of either the model architecture or parameters. Following standard practice in membership inference~\cite{mireshghallah-etal-2022-quantifying,watson2022importance}, we allow the attacker access to a reference model $\pRef$ that is similar to the target model $p_\theta$ but has not been trained on user $u$'s data.
This can simply be the pre-trained model $p_{\theta_0}$ or another LLM.




\myparagraph{Attack strategy}
The attacker's task can be formulated as a statistical hypothesis test. Letting $\Pcal_u$ denote the set of models trained on user $u$'s data, the attacker aims to test:
\begin{align}
    H_0 \, :\, p_\theta \notin \Pcal_u, \qquad 
    H_1 \, :\, p_\theta \in \Pcal_u \,.
\end{align}
There is generally no prescribed recipe to test for such a composite hypothesis. 
Typical attack strategies involve training multiple ``shadow'' models~\cite{shokri2017membership}; see \Cref{appendix:other_methods}. This, however, is infeasible at LLM scale.

The likelihood under the fine-tuned model $p_\theta$ is a natural test statistic: we might expect $p_\theta(\xv\pow{i})$ to be high if $H_1$ is true and low otherwise. Unfortunately, this is not always true, even for membership inference. Indeed, 
$p_\theta(\xv)$ can be large for $\xv \notin D_\ft$ for easy-to-predict $\xv$ (e.g., generic text using common words), while $p_\theta(\xv)$ can be small even if $\xv \in D_\ft$ for hard-to-predict $\xv$. This necessitates the need for calibrating the test using a reference model~\cite{mireshghallah-etal-2022-quantifying,watson2022importance}.

We overcome this difficulty by replacing the attacker's task with surrogate hypotheses that are easier to test efficiently:
\begin{align}
\begin{aligned}
    H_0'\, &: \, \xv\pow{1:m} \sim \pRef^m \,, \qquad
    H_1' \, :\, \xv\pow{1:m} \sim p_\theta^m \,.
\end{aligned}
\end{align}

%

By construction, $H_0'$ is always false since $\pRef$ is not fine-tuned on user $u$'s data. However, $H_1'$ is more likely to be true if the user $u$ participates in training \emph{and}
the samples contributed by $u$ to the fine-tuning dataset $D_\ft$ are similar to the samples $\xv\pow{1:m}$ known to the attacker even if they are not identical. In this case, the attacker rejects $H_0'$. Conversely, if user $u$ did not participate in fine-tuning and no samples from $D_\ft$ are similar to $\xv\pow{1:m}$, then the attacker finds both $H_0'$ and $H_1'$ to be equally (im)plausible, and fails to reject $H_0'$.
Intuitively, to faithfully test $H_0$ vs. $H_1$ using $H_0'$ vs. $H_1'$, we require that
$\xv, \xv' \sim \Puser_u$ are closer on average than $\xv \sim \Puser_u$ and $\xv'' \sim \Puser_{u'}$ for any other $u' \neq u$.


The Neyman-Pearson lemma tells us that the \emph{likelihood ratio test} is the most powerful for testing $H_0'$ vs. $H_1'$, i.e., it achieves the best true positive rate at any given false positive rate~\cite[Thm. 3.2.1]{lehmann1986testing}. This involves constructing a test statistic using the log-likelihood ratio
\begin{align}
\begin{aligned}
    T(\xv\pow{1}, \ldots, \xv\pow{m})
    &:= \log \left( \frac{p_\theta(\xv\pow{1}, \ldots, \xv\pow{m})}{\pRef(\xv\pow{1}, \ldots, \xv\pow{m})} \right) \\
    &= \sum_{i=1}^m \log \left( \frac{p_\theta(\xv\pow{i})}{\pRef(\xv\pow{i})} \right) \,,
\end{aligned}
\end{align}
where the last equality follows from the independence of each $\xv\pow{i}$, which we assume.
Although independence may be violated in some domains (e.g. email threads), it makes the problem more computationally tractable. As we shall see, this already gives us relatively strong attacks.

Given a threshold $\tau$, the attacker rejects the null hypothesis and declares that $u$ has participated in fine-tuning if $T(\xv\pow{1}, \ldots, \xv\pow{m}) > \tau$.
In practice, the number of samples $m$ available to the attacker might vary for each user, so we normalize the statistic by $m$. Thus, our final attack statistic is $\hat T(\xv\pow{1}, \dots, \xv\pow{m}) = \tfrac{1}{m} \, T(\xv\pow{1}, \dots, \xv\pow{m})$.


\begin{table*}[t!]
\centering
\renewcommand{\arraystretch}{1.2}
\small
\begin{tabular}{llrrrrrrr} 
\toprule
\multirow{2}{*}{\textbf{Dataset}} & 
\multirow{2}{*}{\textbf{User Field}} & 
\multirow{2}{*}{\textbf{\#Users}} &
\multirow{2}{*}{\textbf{\#Examples}} &
\multicolumn{5}{c}{\textbf{Percentiles of Examples/User}} \\
\cmidrule{5-9}
& & & & \multicolumn{1}{l}{$\mathbf{P_{0}}$} & \multicolumn{1}{l}{$\mathbf{P_{25}}$} & \multicolumn{1}{l}{$\mathbf{P_{50}}$} & \multicolumn{1}{l}{$\mathbf{P_{75}}$} & \multicolumn{1}{l}{$\mathbf{P_{100}}$}
\\ \hline
Reddit Comments    & 
User Name              &
$5194$      &
$1002K$  &  
$100$      &
$116$      &
$144$      &
$199$      &
$1921$
\\
CC News    & 
Domain Name                       & 
$2839$    &
$660K$        &  
$30$      &
$50$      &
$87$      &
$192$     &
$24480$
\\
Enron Emails        & 
Sender's Email Address                        & 
$136$  &
$91K$    &  
$28$   &
$107$   &
$279$  &
$604$  &
$4280$
\\ \bottomrule
\end{tabular}
\caption{\small \textbf{Evaluation dataset summary statistics}: The three evaluation datasets vary in their notion of ``user'' (i.e. a Reddit comment belongs to the username that it was posted from whereas a CC News article belongs to the web domain where the article was published). Additionally, these datasets span multiple orders of magnitude in terms of number of users and number of examples contributed per user.} \label{tab:datasets}
\end{table*}

\myparagraph{Analysis of the attack statistic}
We analyze this attack statistic in a simplified setting to gain some intuition. 
In the large sample limit as $m \to \infty$, the mean statistic $\hat T$ approximates the population average
\begin{align}
    \bar T(\Puser_{u}) &:= 
    \mathbb{E}_{\xv \sim \Puser_{u}} \left[ \log \left( \frac{p_\theta(\xv)}{\pRef(\xv)} \right) \right] \,.
\end{align}

We will analyze this test statistic for the choice $\pRef = \Puser_{-u} \propto \sum_{u' \neq u} \alpha_{u'} \Puser_{u'}$, which is the fine-tuning mixture distribution excluding the data of user $u$. This is motivated by the results of \citet{watson2022importance} and \citet{sablayrolles2019whitebox}, who show that using a reference model trained on the whole dataset excluding a single sample approximates the optimal membership inference classifier.
Let $\kl(\cdot \Vert \cdot)$ and $\chi^2(\cdot \Vert \cdot)$ denote the Kullback–Leibler and $\chi^2$ divergences. We establish a bound (proved in \Cref{appendix:theory}) assuming $p_\theta,\pRef$ perfectly capture their target distributions.

\begin{proposition}
\label{thm:analysis}
Assume $p_\theta = \Ptask$ and $\pRef = \Puser_{-u}$ for some user $u \in [n]$. Then, we have
\[ 
    \log \left( \alpha_{u} \right) + \kl(\Puser_{u} \parallel \Puser_{-u})
    <
    \bar T(\Puser_{u})
    \le 
    \alpha_{u} \, \chi^2(\Puser_{u} \Vert \Puser_{-u})
    \,.
\]
\end{proposition}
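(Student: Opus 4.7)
The plan is to rewrite the ratio $p_\theta(\xv)/\pRef(\xv)$ in terms of the ratio of user distributions and then apply two elementary inequalities, one for each side of the bound. By assumption $p_\theta = \Ptask = \sum_{u'} \alpha_{u'} \Puser_{u'}$ and $\pRef = \Puser_{-u}$ where $(1-\alpha_u)\Puser_{-u} = \sum_{u' \neq u} \alpha_{u'} \Puser_{u'}$, so the mixture decomposes as $p_\theta(\xv) = \alpha_u \Puser_u(\xv) + (1-\alpha_u) \Puser_{-u}(\xv)$. Dividing by $\pRef(\xv) = \Puser_{-u}(\xv)$ and letting $r(\xv) := \Puser_u(\xv)/\Puser_{-u}(\xv)$, I get
\[
   \frac{p_\theta(\xv)}{\pRef(\xv)} \;=\; 1 + \alpha_u(r(\xv) - 1).
\]

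For the upper bound, I would apply $\log(1+y) \le y$ pointwise in the expectation to obtain
\[
   \bar T(\Puser_u) \;\le\; \alpha_u \, \mathbb{E}_{\xv \sim \Puser_u}[r(\xv) - 1].
\]
Then $\mathbb{E}_{\xv \sim \Puser_u}[r(\xv)] = \int \Puser_u^2 / \Puser_{-u} \, d\xv$, and subtracting $1$ is exactly the standard form of $\chi^2(\Puser_u \Vert \Puser_{-u})$, giving the right-hand inequality.

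For the lower bound, I would use the complementary observation that $p_\theta(\xv) > \alpha_u \Puser_u(\xv)$, where the inequality is strict on a set of positive $\Puser_u$-measure as long as $\alpha_u < 1$ and $\Puser_{-u}$ is not $\Puser_u$-a.s.\ zero (which is implicit since KL appears on the right side and must be finite for the statement to be meaningful). Taking logarithms and then expectations under $\Puser_u$ yields
\[
   \bar T(\Puser_u) \;>\; \log \alpha_u + \mathbb{E}_{\xv \sim \Puser_u}\bigl[\log \Puser_u(\xv) - \log \Puser_{-u}(\xv)\bigr] \;=\; \log \alpha_u + \kl(\Puser_u \parallel \Puser_{-u}),
\]
which is the left-hand inequality.

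Neither direction requires serious work once the mixture decomposition is in place; both bounds follow from a single-line inequality ($\log(1+y) \le y$ for the chi-squared side, $\log(a+b) > \log a$ for the KL side) applied pointwise and then integrated. The only subtlety is justifying strictness in the lower bound, which I would handle by noting that the extra $(1-\alpha_u)\Puser_{-u}$ mass in $p_\theta$ makes the pointwise log inequality strict on a set of positive $\Puser_u$-measure whenever the KL divergence on the right is finite; otherwise the stated inequality is vacuous.
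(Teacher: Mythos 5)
Your proposal is correct and follows essentially the same route as the paper: the same mixture decomposition $p_\theta = \alpha_u \Puser_u + (1-\alpha_u)\Puser_{-u}$, the bound $\log(1+y) \le y$ for the $\chi^2$ side, and dropping the $(1-\alpha_u)\Puser_{-u}$ term inside the logarithm for the KL side (the paper factors out $\log(1-\alpha_u)$ first, but this cancels and the two computations are algebraically identical). Your added remark on when the strict inequality survives the expectation is a detail the paper omits, and it is handled correctly.
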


This suggests the attacker may more easily infer:
\begin{enumerate}[label=(\alph*),nosep, leftmargin=1.5em]
    \item users who contribute more data (so $\alpha_{u}$ is large), or
    \item users who contribute unique data (so $\kl(\Puser_{u} \Vert \Puser_{-u})$ and $\chi^2(\Puser_{u} \Vert \Puser_{-u})$ are large).
\end{enumerate}
Conversely, if neither holds, then a user's participation in fine-tuning cannot be reliably detected. Our experiments corroborate these and we use them to design mitigations.

\section{Experiments}
In this section, we empirically study the susceptibility of models to user inference attacks, the factors that affect attack performance, and potential mitigation strategies.

\begin{figure*}
\centering
\includegraphics[width=0.95\linewidth]{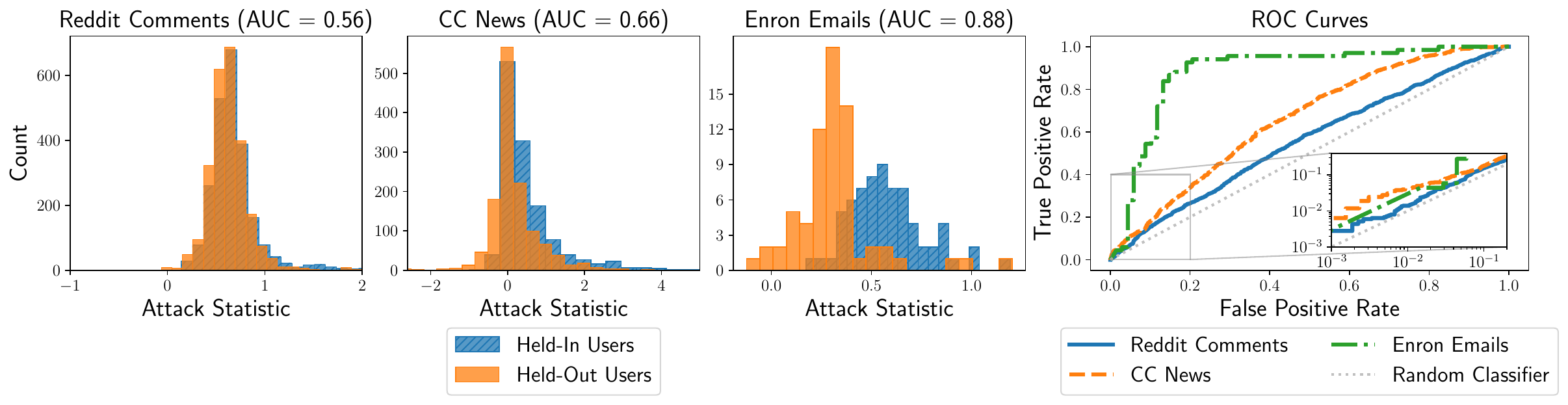}
\caption{\small \textbf{Our attack can achieve significant AUROC}, e.g., on the Enron emails dataset.
\textbf{Left three}: Histograms of the test statistics for held-in and held-out users for the three attack evaluation datasets. \textbf{Rightmost}: Their corresponding ROC curves. 
}
\label{fig:hists_aurocs}
\end{figure*}

\subsection{Experimental Setup}

\myparagraph{Datasets}
We evaluate user inference attacks on three user-stratified text datasets representing different  domains: Reddit Comments \cite{pushshift_reddit} for social media content, CC News\footnote{
    While CC News does not strictly have user data, it is made up of non-identical groups (as in Eq. \eqref{eq:task-distr}) defined by the web domain. We treat each group as a ``user'' as in~\citet{charles2023towards}.
} \cite{Hamborg2017} for news articles, and Enron Emails \cite{Klimt2004IntroducingTE} for user emails.
These datasets are diverse in their domain, notion of a user, number of users, and amount of data contributed per user (\Cref{tab:datasets}).
We also report results for the ArXiv Abstracts dataset \cite{clement2019arxiv} in \Cref{appendix:expt-results}.

To make these datasets suitable for evaluating user inference, we split them into a held-in set of users to fine-tune models, and a held-out set of users to evaluate attacks. Additionally, we set aside 10\% of each user's samples as the samples used by the attacker to run user inference attacks; these samples are not used for fine-tuning.
For more details on the dataset preprocessing, see \Cref{appendix:experiments}.

\myparagraph{Models}
We evaluate user inference attacks on the $125$M and $1.3$B parameter decoder-only LLMs from the GPT-Neo \citep{gpt-neo} model suite. These models were pre-trained on The Pile dataset \citep{gao2020pile}, an $825$ GB diverse text corpus, and use the same architecture and pre-training objectives as the GPT-2 and GPT-3 models. Further details on the fine-tuning are given in \Cref{appendix:experiments}.

\myparagraph{Attack and Evaluation}
We implement the user inference attack of \Cref{sec:attack} using the pre-trained GPT-Neo models as the reference $\pRef$.
Following the membership inference literature,
we evaluate the aggregate attack success using the Receiver Operating Characteristic (ROC) curve across held-in and held-out users; this is a plot of the true positive rate (TPR) and false positive rate (FPR) of the attack across all possible thresholds. We use the area under this curve (AUROC) as a scalar summary.
We also report the TPR at small FPR (e.g., $1\%$) \citep{LiRA}.

\myparagraph{Remarks on Fine-Tuning Data}
Due to the size of pre-training datasets like The Pile, we found it challenging to find user-stratified datasets that were not part of pre-training; this is a problem with LLM evaluations in general~\cite{sainz}. 
However, we believe that our setup still faithfully evaluates the fine-tuning setting for two main reasons. First, the overlapping fine-tuning data constitutes only a small fraction of all the data in The Pile. Second, our attacks are likely only weakened (and thus, underestimate the true risk) by this setup. This is because inclusion of the held-out users in pre-training should only reduce the model's loss on these samples, making the loss difference smaller and thus our attack harder to employ.
\subsection{User Inference: Results and Properties} \label{sec:results}

We examine how user inference is impacted by factors such as the amount of user data and attacker knowledge, the model scale, as well as the connection to overfitting.


\myparagraph{Attack Performance}
We attack GPT-Neo $125$M fine-tuned on each of the three fine-tuning datasets and evaluate the attack performance. 
We see from \Cref{fig:hists_aurocs} that the user inference attacks on all three datasets achieve non-trivial performance, with the attack AUROC varying between $88\%$ (Enron) to $66\%$ (CC News) and $56\%$ (Reddit).

\begin{figure*}[bt]
\centering
\includegraphics[width=\linewidth]{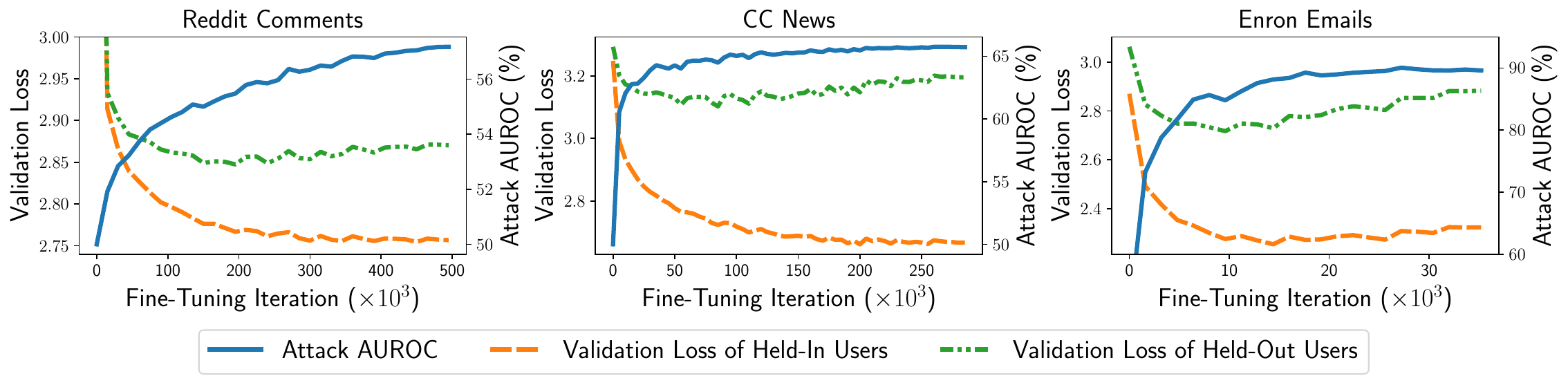}
\caption{\small
\textbf{Attack success over fine-tuning}: User inference AUROC and the held-in/held-out validation loss.
} \label{fig:training_run}
\end{figure*}
\begin{figure*}[t!]
\centering
\includegraphics[width=\linewidth]{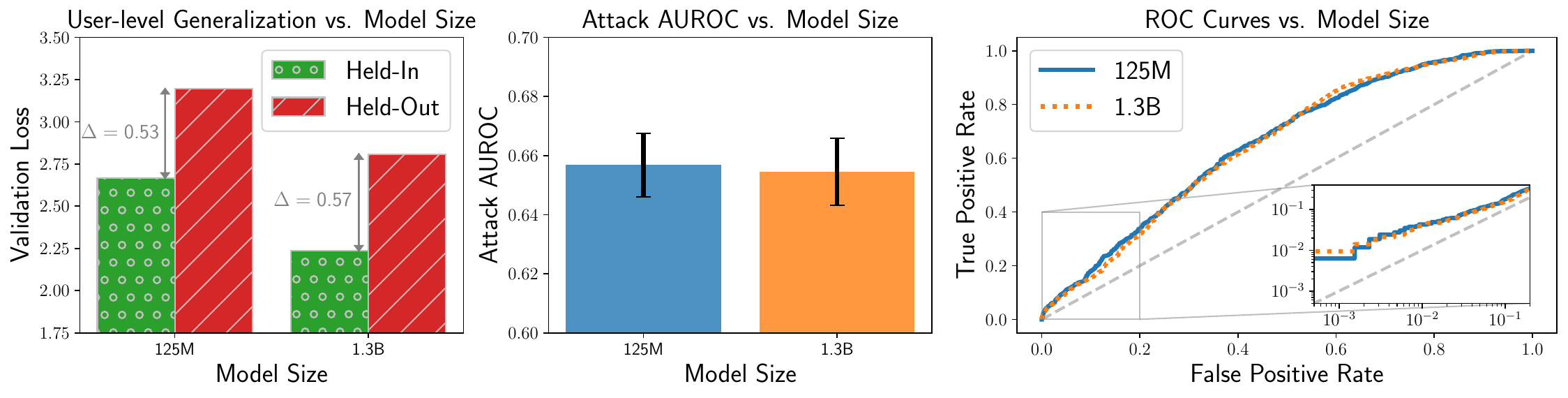}
\caption{\small
\textbf{Attack success vs. model scale}: User inference attack performance in $125$M and $1.3$B models trained on CC News. \textbf{Left}: Although the $1.3$B model achieves lower validation loss, the difference in validation loss between held-in and held-out users is the same as that of the $125$M model. \textbf{Center \& Right}: User inference attacks against the $125$M and $1.3$B models achieve the same performance.} \label{fig:model_scale}
\end{figure*}

\begin{figure}[bth!]
\centering
\includegraphics[width=0.9\linewidth]{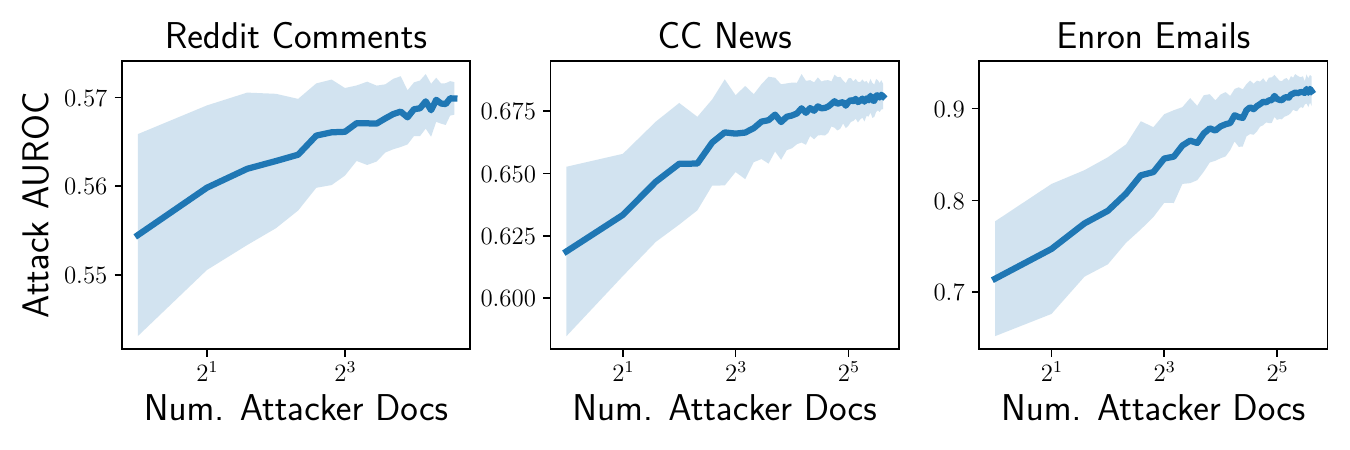}
\caption{\small
\textbf{Attack performance vs. attacker knowledge}: As we increase the number of examples given to the attacker, the attack performance increases across all three datasets. 
The shaded area denotes the std over $100$ random draws of attacker examples.
} \label{fig:attacker_knowledge}
\end{figure}

The disparity in performance between the three datasets can be explained in part by the intuition from \Cref{thm:analysis}, which points out two factors.
First, a larger fraction of data contributed by a user makes user inference easier. The Enron dataset has fewer users, each of whom contributes a significant fraction of the fine-tuning data (cf.~\Cref{tab:datasets}), 
while, the Reddit dataset has a large number of users, each with few datapoints.
Second, distinct user data makes user inference easier. Emails are more distinct due to identifying information such as names (in salutations and signatures) and addresses, while news articles or social media comments from a particular user may share more subtle features like topic or writing style.

\myparagraph{The Effect of the Attacker Knowledge}
We examine the effect of the attacker knowledge
(the amount of user data used by the attacker to compute the test statistic) in \Cref{fig:attacker_knowledge}. First, we find that  more attacker knowledge leads to higher attack AUROC and lower variance in the attack success. For CC News, the AUROC increases from $62.0 \pm 3.3 \%$ when the attacker has only one document to $68.1\pm 0.6\%$ at 50 documents. The user inference attack already leads to non-trivial results with an attacker knowledge of \emph{one document per user} for CC News (AUROC $62.0\%$) and Enron Emails (AUROC $73.2\%$).
Overall, the results show that an attacker does not need much data to mount a strong attack, and more data only helps.

\myparagraph{User Inference and User-level Overfitting}
It is well-established that overfitting to the training data is sufficient for successful membership inference~\cite{yeom_mi}. We find that a similar phenomenon holds for user inference, which is enabled by \emph{user-level overfitting}, i.e., the model overfits not to the training samples themselves, but rather the \emph{distributions} of the training users.

We see from \Cref{fig:training_run} that the validation loss of held-in users continues to decrease for \emph{all 3 datasets}, while the loss of held-out users increases. These curves display a textbook example of overfitting, not to the training data (since both curves are computed using validation data), but to the distributions of the training users. Note that the attack AUROC improves with the widening generalization gap between these two curves.
Indeed, the Spearman correlation between the generalization gap and the attack AUROC is at least $99.4\%$ for all datasets.
This demonstrates the close relation between user-level overfitting and user inference.

\begin{figure*}
\centering
\includegraphics[width=\linewidth]{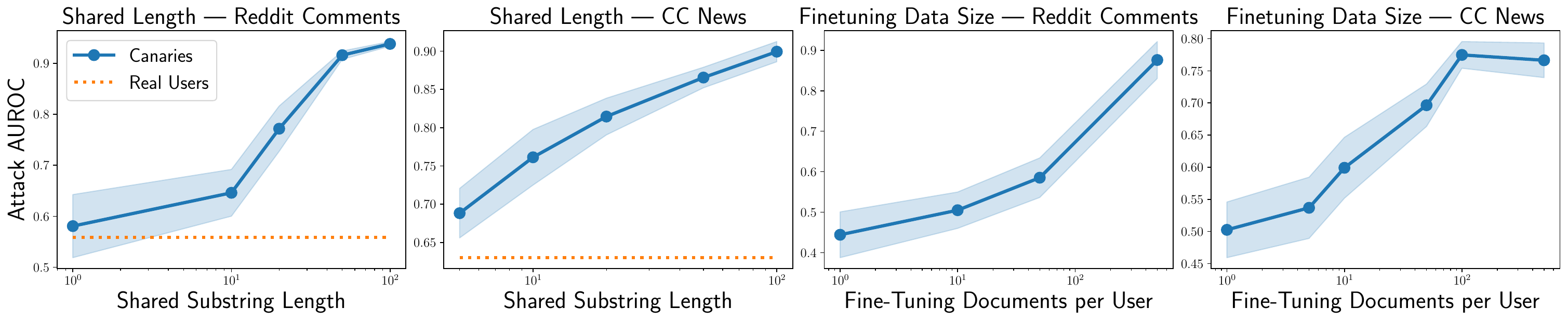}
\caption{\small
Canary experiments. \textbf{Left two}: Comparison of attack performance on the natural distribution of users (``Real Users'') and attack performance on synthetic canary users (each with 100 fine-tuning documents) as the shared substring in a canary's documents varies in length. \textbf{Right two}: Attack performance on canary users (each with a 10-token shared substring) decreases as their contribution to the fine-tuning set decreases.
On all plots, we shade the AUROC std over $100$ bootstrap samples of held-in and held-out users.
} \label{fig:canaries}
\end{figure*}


\myparagraph{Attack Performance and Model Scale}
Next, we investigate the role of model scale in user inference using the GPT-Neo $125$M and $1.3$B on the CC News dataset. 

\Cref{fig:model_scale} shows  that the attack AUROC is nearly identical for the $1.3$B model ($65.3\%$) and $125$M model ($65.8\%$). While the larger model achieves better validation loss on both held-in users ($2.24$ vs. $2.64$) and held-out users ($2.81$ vs. $3.20$), the generalization gap is nearly the same for both models ($0.57$ vs. $0.53$). This shows a qualitative difference between user and membership inference, where attack performance reliably increases with model size in the latter \cite{carlini2023quantifying,NEURIPS2022_fa0509f4,kandpal2022deduplicating,mireshghallah-etal-2022-quantifying,anil2023palm}.

\subsection{User Inference in the Worst-Case} \label{sec:canaries}

The disproportionately large downside to privacy leakage necessitates looking beyond the average-case privacy risk to worst-case settings.
Thus, we analyze attack performance on datasets containing synthetically generated users, known as \emph{canaries}.
There is usually a trade-off between making the canary users realistic and worsening their privacy risk. We intentionally err on the side of making them realistic to illustrate the potential risks of user inference.

To construct a canary user, we first sample a real user from the dataset and insert a particular substring into each of that user's examples. The substring shared between all of the user's examples is a contiguous substring randomly sampled from one of their documents (for more details, see \Cref{appendix:experiments}). We construct $180$ canary users with shared substrings ranging from $1$-$100$ tokens in length and inject these users into the Reddit and CC News datasets. 
We do not experiment with synthetic canaries in Enron Emails, as the attack AUROC already exceeds $88\%$ for real users. 

\Cref{fig:canaries} (left) shows that the attack is more effective on canaries than real users, and increases with the length of the shared substring. 
A short shared substring is enough to significantly increase the attack AUROC from $63\%$ to $69\%$ (5 tokens) for CC News and $56\%$ to $65\%$ for Reddit (10 tokens).

These results raise a question if canary gradients can be filtered out easily (e.g., using the $\ell_2$ norm). However, \Cref{fig:gradient_clipping} (right) shows that the gradient norm distribution of the canary gradients and those of real users are nearly indistinguishable.
This shows that our canaries are close to real users from the model's perspective, and thus hard to filter out. This experiment also demonstrates the increased privacy risk for users who use, for instance, a short and unique signature in emails or characteristic phrases in documents.

\subsection{Mitigation Strategies} \label{sec:defenses}
Finally, we investigate existing techniques for limiting the influence of individual examples or users on model fine-tuning as methods for mitigating user inference attacks.

\myparagraph{Gradient Clipping}
Since we consider fine-tuning that is oblivious to the user-stratification of the data, one can limit the model's sensitivity by clipping the gradients per batch \cite{pascanu2013difficulty} or per example \cite{Abadi_2016}. \Cref{fig:gradient_clipping} (left) plots its effect for the $125$M model on CC News: neither batch nor per-example gradient clipping have any effect on user inference. \Cref{fig:gradient_clipping} (right) tells us why: canary examples do not have large outlying gradients and clipping affects real and canary data similarly. Thus, gradient clipping is an ineffective mitigation strategy.

\myparagraph{Early Stopping}
The connection between user inference and user-level overfitting from \Cref{sec:results} suggests that early stopping, a common heuristic used to prevent overfitting~\cite{caruana2000overfitting}, could potentially mitigate user inference. Unfortunately, we find that $95\%$ of the final AUROC is obtained quite early in training: $15$K steps ($5\%$ of the fine-tuning) for CC News and $90$K steps ($18\%$ of the fine-tuning) for Reddit, see \Cref{fig:training_run}.
Typically, the overall validation loss still decreases far after this point. This suggests an explicit tradeoff between model utility (e.g., in validation loss) and privacy risks from user inference.

\myparagraph{Data Limits Per User}
Since we cannot change the fine-tuning procedure, we consider limiting the amount of fine-tuning data per user.
\Cref{fig:canaries} (right two) show that this can be effective. For CC News, the AUROC for canary users reduces from $77\%$ at $100$ fine-tuning documents per user to almost random chance at $5$ documents per user.
A similar trend also holds for Reddit.

\myparagraph{Data Deduplication}
Since data deduplication can mitigate membership inference~\cite{lee2021deduplicating,kandpal2022deduplicating}, we evaluate it for user inference. CC News is the only dataset in our suite with within-user duplicates (Reddit and Enron are deduplicated in the preprocessing; see \Cref{sec:a:expt:ds}), so we use it for this experiment.\footnote{
    Although each article of CC News from HuggingFace Datasets has a unique URL, the text of $11\%$ of the articles has exact duplicates from the same domain.
    See \S\ref{sec:ccnews-dup} for examples.
}
The deduplication reduces the attack AUROC from $65.7\%$ to $59.1\%$. The attack ROC curve of the deduplicated version is also uniformly lower, even at extremely small FPRs (\Cref{fig:dedup}). 

Thus, data \emph{repetition} (e.g., due to poor preprocessing) can exacerbate user inference. However, the results on Reddit and Enron Emails (no duplicates) suggest that deduplication alone is insufficient to fully mitigate user inference.

\begin{figure}[h]
\captionsetup{width=0.45\linewidth} 
    \begin{minipage}[t]{0.5\linewidth}
        \centering
        \includegraphics[width=\linewidth]{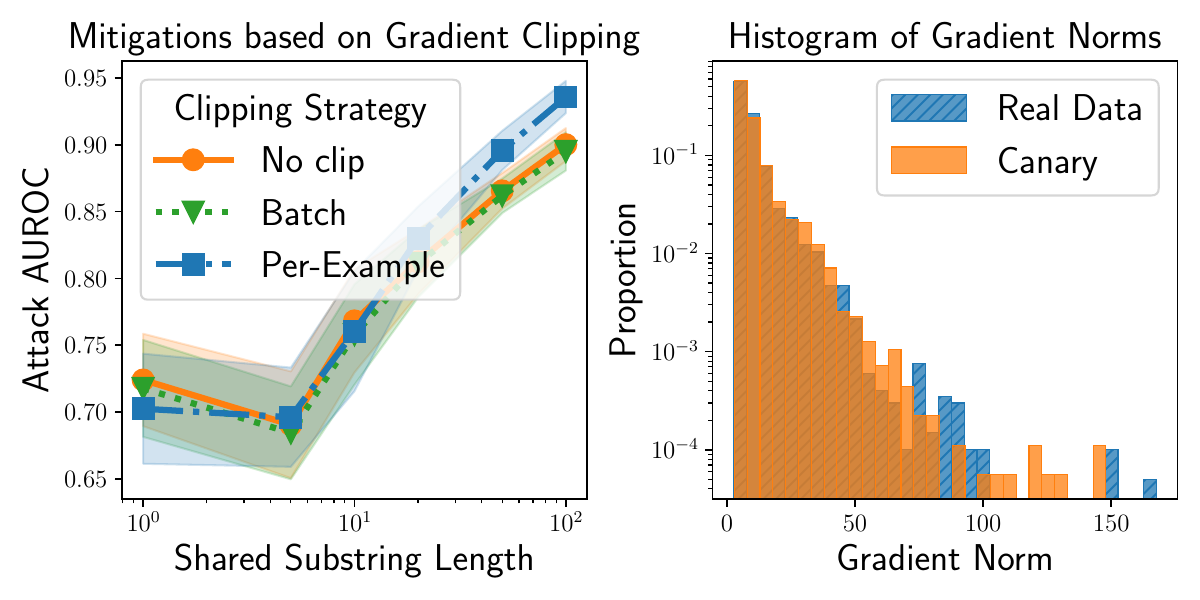}
        \caption{\small
Mitigation with gradient clipping. \textbf{Left}: Attack effectiveness for canaries with different shared substring lengths with gradient clipping ($125$M model, CC News). \textbf{Right}: The distribution of gradient norms for canary examples and real examples.}
        \label{fig:gradient_clipping}
    \end{minipage}
    \begin{minipage}[t]{0.5\linewidth}
        \centering
        \includegraphics[width=0.85\linewidth]{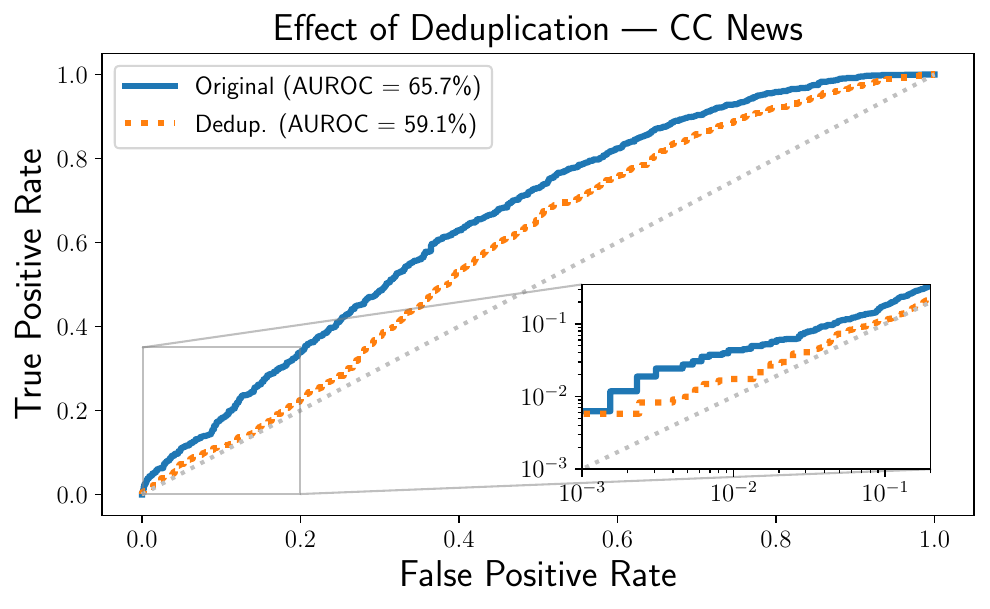}
        \caption{\small Effect of data deduplication per-user on CC News. \Cref{tab:ccnews-dedup} in \Cref{appendix:expt-results} gives TPR values at low FPR.}
        \label{fig:dedup}
    \end{minipage}
\end{figure}

\myparagraph{Example-level Differential Privacy (DP)}
DP~\cite{DMNS} gives provable bounds on privacy leakage. We study how example-level DP, which protects the privacy of individual \emph{examples}, impacts \emph{user} inference. We train the 125M model on Enron Emails using DP-Adam, a variant of Adam that clips per-example gradients and adds noise calibrated to the privacy budget $\varepsilon$. We find next that example-level DP can somewhat mitigate user inference while incurring increased compute cost and a degraded model utility.

Obtaining good utility with DP requires large batches and more epochs~\citep{ponomareva2023dp}, so we use a batch size of $1024$, tune the learning rate, and train the model for $50$ epochs ($1.2K$ updates), so that each job runs in $24$h (in comparison, non-private training takes $1.5$h for $7$ epochs). Further details of the tuning are given in \Cref{appendix:dp-tuning}.

\Cref{tab:enron_dp} shows a severe degradation in the validation loss under DP. For instance, a loss of $2.67$ at the weak guarantee of  $\varepsilon=32$ is  surpassed after just $1/3$\textsuperscript{rd} of an epoch of non-private training; this loss continues to reduce to $2.43$ after $3$ epochs.
In terms of attack effectiveness, example-level DP reduces the attack AUROC and the TPR at FPR $=5\%$, while the TPR at FPR $=1\%$ remains the same or gets worse.
Indeed, while example-level DP protects individual examples, it can fail to protect the privacy of \emph{users}, especially when they contribute many examples.
This highlights the need for scalable algorithms and software for fine-tuning LLMs with \emph{DP at the user-level}. Currently,  user-level DP algorithms have been designed for small models in federated learning, but do not yet scale to LLMs.

\begin{table}[ht]
\renewcommand{\arraystretch}{1.2}
\centering
\adjustbox{max width=0.99\linewidth}{%
\small
\begin{tabular}{ccccc}
    \toprule
    \textbf{Metric} & $\varepsilon=2$ & $\varepsilon=8$ & $\varepsilon=32$ & Non-private \\
    \midrule
    Val. Loss & $2.77$ & $2.71$ & $2.67$ & $2.43$ \\
    Attack AUROC & $64.7\%$ & $66.7\%$ & $67.9\%$ & $88.1\%$ 
    \\
    TPR @ FPR$=1\%$ & $8.8\%$ &	$8.8\%$ & $10.3\%$ &  	 $4.4\%$ \\
    TPR @ FPR$=5\%$ & $11.8\%$ & $10.3\%$ &	$10.3\%$  &	$27.9\%$ \\
    \bottomrule
    \end{tabular}}
    \caption{\small \textbf{Example-level differential privacy}: Training a model on Enron Emails under $(\varepsilon, 10^{-6})$-DP at the example-level (smaller $\varepsilon$ implies a higher level of privacy).}
    \label{tab:enron_dp}
\end{table}

\myparagraph{Summary}
Our results show that user inference is hard to mitigate with common heuristics.
Careful deduplication is necessary to ensure that data repetition does not exacerbate user inference.
Enforcing data limits per user can be effective but this only works for data-rich applications with a large number of users. Example-level DP can offer moderate mitigation but at the cost of increased data/compute and degraded model utility.
Developing an effective mitigation strategy that also works efficiently in data-scarce applications remains an open problem.

\section{Discussion and Conclusion}

When collecting data for fine-tuning an LLM, data from a company's users is often the natural choice since it closely resembles the types of inputs a deployed LLM will encounter.
However fine-tuning on user-stratified data also exposes new opportunities for privacy leakage. 
Until now, most work on privacy of LLMs have ignored any structure in the training data, but as the field shifts towards collecting data from new, potentially sensitive, sources, it is important to adapt our privacy threat models accordingly. Our work introduces a novel privacy attack exposing user participation in fine-tuning, and future work should explore other LLM privacy violations beyond membership inference and training data extraction. Furthermore, this work underscores the need for scaling user-aware training pipelines, such as user-level DP, to handle large datasets and models.


\section{Broader Impacts}

This work highlights a novel privacy vulnerability in LLMs fine-tuned on potentially sensitive user data.  Hypothetically, our methods could be leveraged by an attacker with API access to a fine-tuned LLM to infer which users contributed their data to the model's fine-tuning set. To mitigate the risk of data exposure, we performed experiments on public GPT-Neo models, using public datasets for fine-tuning, ensuring that our experiments do not disclose any sensitive user information. 

We envision that these methods will offer practical tools for conducting privacy audits of LLMs before releasing them for public use. By running user inference attacks, a company fine-tuning LLMs on user data can gain insights into the privacy risks exposed by providing access to the models and assess the effectiveness  of deploying mitigations. To counteract our proposed attacks, we evaluate several defense strategies, including example-level differential privacy and restricting individual user contributions, both of which provide partial mitigation of this threat.  We leave to future work the challenging problem of fully protecting LLMs against user inference with provable guarantees. 

\bibliography{refs}
\bibliographystyle{abbrvnat}

\clearpage

\appendix
\part{Appendix} 
The outline of the appendix is as follows:
\begin{itemize}[nosep]
    \item \Cref{appendix:theory}: Proof of the analysis of the attack statistic (\Cref{thm:analysis}).
    \item \Cref{appendix:other_methods}: Alternate approaches to solving user inference (e.g. if the computational cost was not a limiting factor).
    \item \Cref{sec:a:related}: Further details on related work.
    \item \Cref{appendix:experiments}: Detailed experimental setup (datasets, models, hyperparameters).
    \item \Cref{appendix:expt-results}: 
    Additional experimental results.
    \item \Cref{appendix:user-level-dp}:
    A discussion of user-level DP, its  promises, and challenges.
\end{itemize}

\section{Theoretical Analysis of the Attack Statistic} \label{appendix:theory}

We prove \Cref{thm:analysis} here.

\myparagraph{Recall of definitions}
The KL and $\chi^2$ divergences are defined respectively as
\[
    \kl(P \Vert Q) = \sum_{\xv} P(\xv) \log\left(\frac{P(\xv)}{Q(\xv)}\right)\,
    \quad\text{and}\quad
    \chi^2(P \Vert Q) = \sum_{\xv} \frac{P(\xv)^2}{Q(\xv)} - 1 \,.
\]

Recall that we also defined
\begin{gather*}
    \pRef(\xv) = \Puser_{-u}(\xv) = \frac{\sum_{u' \neq u} \alpha_{u'} \Puser_{u'}}{\sum_{u' \neq u} \alpha_{u'}} 
    = \frac{\sum_{u' \neq u} \alpha_{u'} \Puser_{u'}}{1 - \alpha_{u}}\,,
    \quad \text{and} \\
    p_\theta(\xv) = \sum_{u' = 1}^n \alpha_{u'} \Puser_{u'}(\xv) = \alpha_u \Puser_u(\xv) + (1 - \alpha_u) \Puser_{-u}(\xv) \,.
\end{gather*}

\myparagraph{Proof of the upper bound}
Using the inequality $\log(1+t) \le t$ we get, 
\begin{align*}
    \bar T(\Puser_u) 
    &= \expect_{\xv \sim \Puser_u} \left[ \log\left( \frac{p_\theta(\xv)}{\pRef(\xv)} \right) \right] \\
    &= \expect_{\xv \sim \Puser_u} \left[ \log\left( \frac{\alpha_u \Puser_u(\xv) + (1 - \alpha_u)\Puser_{-u}(\xv)}{\Puser_{-u}(\xv)} \right) \right] \\
    &= \expect_{\xv \sim \Puser_u}\left[
        \log\left(1 + \alpha_u \left( \tfrac{\Puser_u(\xv)}{\Puser_{-u}(\xv)} - 1 \right) \right)
    \right] \\
    &\le \alpha_{u} \, \expect_{\xv \sim \Puser_u}\left[\frac{\Puser_u(\xv)}{\Puser_{-u}(\xv)} - 1 \right] = \alpha_u \, \chi^2 \left(\Puser_u \Vert \Puser_{-u} \right) \,.
\end{align*}

\myparagraph{Proof of the lower bound}
Using $\log(1+t) > \log(t)$, we get
\begin{align*}
    \bar T(\Puser_u)
    &= \expect_{\xv \sim \Puser_u} \left[ \log\left( 
    \frac{p_\theta(\xv)}{\pRef(\xv)} \right) \right] 
    \\
    &= \expect_{\xv \sim \Puser_u} \left[ \log\left( \frac{\alpha_u \Puser_u(\xv) + (1 - \alpha_u)\Puser_{-u}(\xv)}{\Puser_{-u}(\xv)} \right) \right]
    \\
    &= \log(1 - \alpha_u) + \expect_{\xv \sim \Puser_u} \left[ \log\left( \frac{\alpha_u \Puser_u(\xv)}{(1 - \alpha_u)\Puser_{-u}(\xv)} + 1  \right) \right]
    \\
    &> \log(1 - \alpha_u) + \expect_{\xv \sim \Puser_u} \left[ \log\left( \frac{\alpha_u \Puser_u(\xv)}{(1 - \alpha_u)\Puser_{-u}(\xv)}\right) \right]
    \\
    &= \log(\alpha_u) + \expect_{\xv \sim \Puser_{u}} \left[ \log \left( \frac{\Puser_{u}(\xv)}{\Puser_{-u}(\xv)} \right) \right] = \log(\alpha_u) + \kl(\Puser_{u} \Vert \Puser_{-u}) \,.
\end{align*}

\section{Alternate Approaches to User Inference}
\label{appendix:other_methods}

We consider some alternate approaches to user inference that are inspired by the existing literature on membership inference. 
As we shall see, these approaches are impractical for the LLM user inference setting where exact samples from the fine-tuning data are not known to the attacker and models are costly to train.

A common approach for membership inference is to train ``shadow models'', models trained in a similar fashion and on similar data to the model being attacked \citep{shokri2017membership}. Once many shadow models have been trained, one can construct a classifier that identifies whether the target model has been trained on a particular example. Typically, this classifier takes as input a model's loss on the example in question and is learned based on the shadow models' losses on examples that were (or were not) a part of \emph{their} training data. This approach could in principle be adapted to user inference on LLMs.

First, we would need to assume that the attacker has enough data from user $u$ to fine-tune shadow models on datasets containing user $u$'s data as well as an additional set of samples used to compute $u$'s likelihood under the shadow models. Thus, we assume the attacker has $n$ samples $\xv_{train}\pow{1:n} := (\xv\pow{1}, \ldots, \xv\pow{n}) \sim \Puser_u^n$ used for shadow model training and $m$ samples $\xv\pow{1:m} := (\xv\pow{1}, \ldots, \xv\pow{m}) \sim \Puser_u^m$ used to compute likelihoods.

Next, the attacker trains many shadow models on data similar to the target model's fine-tuning data, including $\xv_{train}\pow{1:n}$ in half of the shadow models' fine-tuning data. This repeated training yields samples from two distributions: the distribution of models trained with user $u$'s data $\mathcal{P}$ and the distribution of models trained without user $u$'s data $\mathcal{Q}$. The goal of the user inference attack is to determine which distribution the target model is more likely sampled from.

However, since we assume the attacker has only black-box access to the target model, they must instead perform a different hypothesis test based on the likelihood of $\xv\pow{1:m}$ under the target model. To this end, the attacker must evaluate  the shadow models on $\xv\pow{1:m}$ to draw samples from:


\begin{align}
\begin{aligned}
    \mathcal{P}'\, &: \, p_\theta(\xv) \, \,  \text{where} \, \, \theta \sim \mathcal{P}, \xv \sim \Puser_u \,, \qquad
    \mathcal{Q}'\, :\, p_\theta(\xv) \, \,  \text{where} \, \, \theta \sim \mathcal{Q}, \xv \sim \Puser_u \,.
\end{aligned}
\end{align}

Finally, the attacker can classify user $u$ as being part (or not part) of the target model's fine-tuning data based on whether the likelihood values of the target model on $\xv\pow{1:m}$ are more likely under $\mathcal{P}'$ or $\mathcal{Q}'$.

While this is the ideal approach to performing user inference with no computational constraints, it is infeasible due to the cost of repeatedly training shadow LLMs and the assumption that the attacker has enough data from user $u$ to both train and evaluate shadow models.
\section{Further Details on Related Work}
\label{sec:a:related}

There are several papers that study the risk of user inference attacks, but they either have a different threat model, or are not applicable to LLMs.

\myparagraph{User-level Membership Inference}
We refer to problems of identifying a user's participation in training when given the exact training samples of the user as \emph{user-level membership inference}. 
\citet{song2019auditing} propose methods for inferring whether a user's data was part of the training set of a language model, under the assumption that the attacker has access to the user's training set. For their attack, they train multiple shadow models on subsets of multiple users' training data and a meta-classifier to distinguish users who participating in training from those who did not. This meta-classifier based  methodology  is not feasible for LLMs due to its high computational complexity.
Moreover, the notion of a ``user'' in their experiments is a random i.i.d. subset of the dataset; this does not work for the more realistic threat model of user inference, which relies on the similarity between the attacker's samples of a user to the training samples contributed by this user.

\citet{shejwalkar2021membership} also assume that the attacker knows the user's training set and perform user-level inference for NLP classification models by aggregating the results of membership inference for each sample of the target user.

\myparagraph{User Inference}
In the context of classification and regression, \citet{Hartmann_DistrInf_2023} define distributional membership inference, with the goal of identifying if a user participated in the training set of a model without knowledge of the exact training samples. 
This coincides with our definition of user inference.
\citet{Hartmann_DistrInf_2023} use existing shadow model-based attacks for distribution (or property) inference~\cite{Ganju_PropInf_2018}, as their main goal is to analyze sources of leakage and evaluate defenses. User inference attacks have been also studied in other applications domains, such as embedding learning for vision~\cite{li2022userlevel}  and speech recognition for IoT devices~\cite{miao2021audio}. \citet{face_auditor} design a black-box user-level auditing procedure on face recognition systems in which an auditor has access to images of a particular user that are not part of the training set. In federated learning, \citet{Wang_FL_User_2019} and \citet{Song_FL_User_Privacy_2020} analyze the risk of user inference by a malicious server.
None of these works apply to our LLM setting because they are either (a) domain-specific, or (b) computationally inefficient (e.g. due to shadow models).




\myparagraph{Comparison to Related Tasks}
User inference on text models is related to, but distinct from authorship attribution, the task of identifying authors from a user population given access to multiple writing samples.
We recall it definition and discuss the similarities and differences.
The goal of authorship attribution (AA) is to find which of the given population of users wrote a given text. For user inference (UI), on the other hand, the goal is to figure out \emph{if} any data from a given user was used to train a given \emph{model}. 
Note the key distinction here: there is no model in the problem statement of AA while the entire population of users is not assumed to be known for UI. Indeed, UI cannot be reduced to AA or vice versa: Solving AA does not solve UI because it does not tell us whether the user’s data was used to train a given LLM (which is absent from the problem statement of AA). Likewise, solving UI only tells us that a user’s data was used to train a given model but it does not tell us which user from a given population this data comes from (since the full population of users is not assumed to be known for UI).


Author attribution assumes that the entire user population is known, which is not required in user inference. Existing work on author attribution~\cite[e.g.][]{luyckx-daelemans-2008-authorship,10.1093/llc/fqq013} casts the problem as a classification task with one class per user, and does not scale to large number of users. 
Interestingly, \citet{10.1093/llc/fqq013} identified that the number of  authors and the amount of training data  per  author are important factors for the success of author attribution, also reflected by our findings when analyzing the user inference attack success. Connecting author attribution with privacy attacks on LLM fine-tuning could be a topic of future work.

\section{Experimental Setup} \label{appendix:experiments}

In this section, we give the following details:
\begin{itemize}[nosep]
    \item \Cref{sec:a:expt:ds}: Full details of the datasets, their preprocessing, the models used, and the evaluation of the attack.
    \item \Cref{sec:a:expt:canary}: Pseudocode of the canary construction algorithm.
    \item \Cref{sec:a:expt:mitigation}: Precise definitions of mitigation strategies.
    \item \Cref{appendix:dp-tuning}: Details of hyperparameter tuning for example-level DP.
    \item \Cref{sec:ccnews-dup}: 
    Analysis of the duplicates present in CC News.
\end{itemize}

\begin{figure*}[t]
    \includegraphics[width=\linewidth]{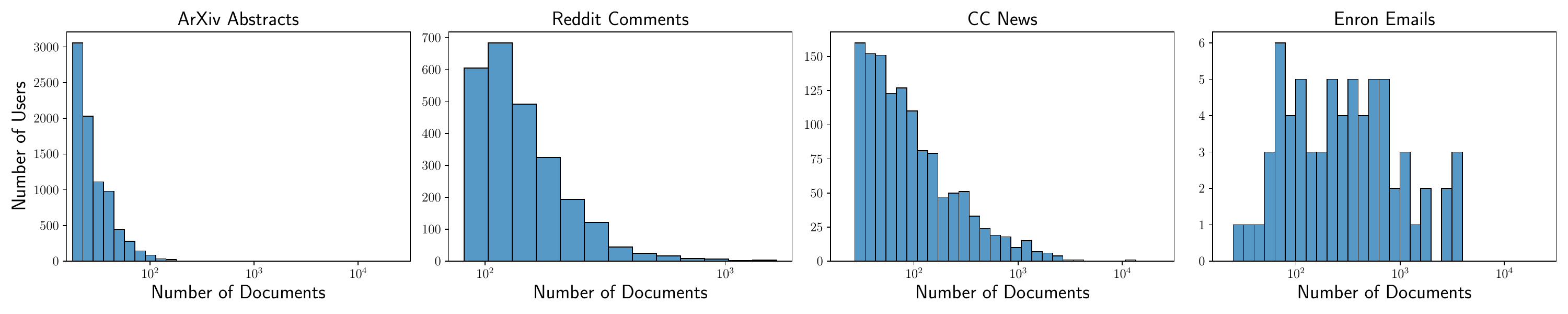}
\caption{\small Histogram of number of documents per user for each dataset.}
\label{fig:dataset_hist}
\end{figure*}

\subsection{Datasets, Models, Evaluation}
\label{sec:a:expt:ds}

We evaluate user inference attacks on four user-stratified datasets. Here, we describe the datasets, the notion of a ```user''' in each dataset, and any initial filtering steps applied. \Cref{fig:dataset_hist} gives a histogram of data per user (see also \Cref{tab:datasets,tab:datasets-arxiv}).

\begin{itemize}
 \item \textbf{Reddit Comments}\footnote{\url{https://huggingface.co/datasets/fddemarco/pushshift-reddit-comments}} \citep{pushshift_reddit} : Each example is a comment posted on Reddit. We define a user associated with a comment to be the username that posted the comment.
 
 The raw comment dump contains about 1.8 billion comments posted over a four-year span between 2012 and 2016. To make the dataset suitable for experiments on user inference, we take the following preprocessing steps:
 \begin{itemize}
     \item To reduce the size of the dataset, we initially filter to comments made during a six-month period between September 2015 and February 2016, resulting in a smaller dataset of 331 million comments.
     \item As a heuristic for filtering automated Reddit bot and moderator accounts from the dataset, we remove any comments posted by users with the substring ```bot''' or ```mod''' in their name and users with over 2000 comments in the dataset.
     \item We filter out low-information comments that are shorter than 250 tokens in length.
     \item Finally, we retain users with at least $100$ comments  for the user inference task, leading to around $5K$ users.
 \end{itemize}

\myparagraph{Reddit Small}
We also create a smaller version of this dataset with 4 months' data (the rest of the preprocessing pipeline remains the same). This gives us a dataset which is roughly half the size of the original one after filtering --- we denote this as ``{Reddit Comments (Small)}'' in \Cref{tab:datasets-arxiv}.

Although the unprocessed version of the small 4-month dataset is a subset of the unprocessed 6-month dataset, this is not longer the case after processing. After processing, 2626 users of the original 2774 users in the 4 month dataset were retained in the 6 month dataset. The other 148 users went over the 2000 comment threshold due to the additional 2 months of data and were filtered out as a part of the bot-filtering heuristic.
Note also that the held-in and held-out split between the two Reddit datasets is different (of the 1324 users in the 4-month training set, only 618 are in the 6-month training set). Still, we believe that a comparison between these two datasets gives a reasonable approximation how user inference changes with the scale of the dataset due to the larger number of users.
These results are given in \Cref{sec:a:results:data-size}.

\item \textbf{CC News}\footnote{\url{https://huggingface.co/datasets/cc_news}} \cite{Hamborg2017,charles2023towards}: Each example is a news article published on the Internet between January 2017 and December 2019. We define a user associated with an article to be the web domain where the article was found (e.g., nytimes.com). While CC News is not user-generated data (such as emails or posts used for the other datasets), it is a large group-partitioned dataset and has been used as a public benchmark for user-stratified federated learning applications~\cite{charles2023towards}. We note that this practice is common with other group-partitioned web datasets such as Stack Overflow~\cite{reddi2021adaptive}.


\item \textbf{Enron Emails}\footnote{\url{https://www.cs.cmu.edu/~enron/}} \cite{Klimt2004IntroducingTE}: Each example is an email found in the account of employees of the Enron corporation prior to its collapse. We define the user associated with an email to be the email address that sent an email. 

The original dataset contains a dump of emails in various folders of each user, e.g., ``inbox'', ``sent'', ``calendar'', ``notes'', ``deleted items'', etc. Thus, it contains a set of emails sent and received by each user. In some cases, each user also has multiple email addresses. Thus we take the following preprocessing steps for each user:
\begin{itemize}
    \item We list all the candidate sender's email address values on emails for a given user.
    \item We filter and keep candidate email addresses that contain the last name of the user, as inferred from the user name (assuming the user name is \texttt{<last name>-<first initial>}), also appears in the email.\footnote{
        This processing omits some users. For instance, the most frequently appearing sender's email of the user ``crandell-s'' with inferred last name ``crandell'' is \texttt{sean.crandall@enron.com}. It is thus omitted by the preprocessing.
    }
    \item We associate the most frequently appearing sender's email address from the remaining candidates.
    \item Finally, this dataset contains duplicates (e.g. the same email appears in the ``inbox'' and ``calendar'' folders). We then explicitly deduplicate all emails sent by this email address to remove exact duplicates. This gives the final set of examples for each user.
\end{itemize}
We verified that each of the remaining 138 users had their unique email addresses.

\item \textbf{ArXiv Abstracts}\footnote{\url{https://huggingface.co/datasets/gfissore/arxiv-abstracts-2021}} \cite{clement2019arxiv}: Each example is a scientific abstract posted to the ArXiv pre-print server through the end of 2021. We define the user associated with an abstract to be the first author of the paper. Note that this notion of author may not always reflect who actually wrote the abstract in case of collaborative papers. As we do not have access to perfect ground truth in this case, there is a possibility that the user labeling might have some errors (e.g. a non-first author wrote an abstract or multiple users collaborated on the same abstract). Thus, we postpone the results for the ArXiv Abstracts dataset to \Cref{appendix:expt-results}. 
See \Cref{tab:datasets-arxiv} for statistics of the ArXiv dataset.

\end{itemize}

\begin{table*}[t!]
\centering
\renewcommand{\arraystretch}{1.2}
\small
\begin{tabular}{llrrrrrrr} 
\toprule
\multirow{2}{*}{\textbf{Dataset}} & 
\multirow{2}{*}{\textbf{User Field}} & 
\multirow{2}{*}{\textbf{\#Users}} &
\multirow{2}{*}{\textbf{\#Examples}} &
\multicolumn{5}{c}{\textbf{Percentiles of Examples/User}} \\
\cmidrule{5-9}
& & & & \multicolumn{1}{l}{$\mathbf{P_{0}}$} & \multicolumn{1}{l}{$\mathbf{P_{25}}$} & \multicolumn{1}{l}{$\mathbf{P_{50}}$} & \multicolumn{1}{l}{$\mathbf{P_{75}}$} & \multicolumn{1}{l}{$\mathbf{P_{100}}$}
\\ \hline
ArXiv Abstracts    & 
Submitter              &
$16511$      &
$625K$  &  
$20$      &
$24$      &
$30$      &
$41$      &
$3204$ \\
Reddit Comments (Small) &
User Name &
2774 &
$537K$ &
$100$ &
$115$ &
$141$ &
$194$ &
$1662$
\\ \bottomrule
\end{tabular}
\caption{\small Summary statistics for additional datasets.} \label{tab:datasets-arxiv}
\end{table*}

Despite the imperfect ground truth labeling of the ArXiv datasets, we believe that evaluating the proposed user inference attack reveals the risk of privacy leakage in fine-tuned LLMs for two reasons.
First, the fact that we have significant privacy leakage despite imperfect user labeling suggests that the attack will only get stronger if we had perfect ground truth user labeling and non-overlapping users. This is because mixing distributions only brings them closer, as shown in \Cref{prop:mixing} below. Second, our experiments on canary users are not impacted at all by the possible overlap in user labeling, since we create our own synthetically-generated canaries to evaluate worst-case privacy leakage. 

\begin{proposition}[Mixing Distributions Brings Them Closer]
\label{prop:mixing}
Let $P, Q$ be two user distributions over text. Suppose mislabeling leads to the respective mixture distributions of $P' = \lambda P + (1 - \lambda)Q $ and $Q' = \mu Q + (1 - \mu)P$ for some $\lambda, \mu \in [0, 1]$. Then, we have, $\kl(P' \Vert Q') \le \kl(P \Vert Q)$.
\end{proposition}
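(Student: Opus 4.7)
The plan is to establish the inequality via joint convexity of the KL divergence, combined with a matched decomposition of $Q'$.

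A preliminary observation: for the stated inequality to hold, an assumption like $\lambda + \mu \ge 1$ is needed. Indeed, taking $\lambda = \mu = 0$ gives $P' = Q$ and $Q' = P$, so $\kl(P' \Vert Q') = \kl(Q \Vert P)$, which is not bounded by $\kl(P \Vert Q)$ in general. I would therefore work in the natural mislabeling regime $\lambda + \mu \ge 1$ (each label is correct more often than not), which seems implicit in the proposition.

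The central idea is to rewrite $Q'$ so that it shares the outer mixing weight $\lambda$ with $P'$. Setting $\beta := (1-\mu)/\lambda$ (which lies in $[0,1]$ precisely when $\lambda + \mu \ge 1$) and defining the auxiliary distribution $A := \beta P + (1-\beta) Q$, a direct algebraic check gives $Q' = \lambda A + (1-\lambda) Q$, matching the form $P' = \lambda P + (1-\lambda) Q$. With this matched decomposition, the joint convexity of KL divergence yields
\begin{equation*}
\kl(P' \Vert Q') = \kl\bigl(\lambda P + (1-\lambda) Q \,\big\Vert\, \lambda A + (1-\lambda) Q\bigr) \le \lambda \, \kl(P \Vert A) + (1-\lambda)\, \kl(Q \Vert Q) = \lambda \, \kl(P \Vert A).
\end{equation*}
To finish, I would apply the convexity of $\kl(P \Vert \cdot)$ in the second argument to $A = \beta P + (1-\beta) Q$, obtaining $\kl(P \Vert A) \le \beta \kl(P \Vert P) + (1-\beta) \kl(P \Vert Q) = (1-\beta)\kl(P \Vert Q)$. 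Combining the two bounds gives the slightly stronger statement $\kl(P' \Vert Q') \le (\lambda + \mu - 1)\, \kl(P \Vert Q)$, which is at most $\kl(P \Vert Q)$ since $\lambda + \mu - 1 \in [0,1]$.

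The main obstacle is spotting the reparameterization of $Q'$ that enables joint convexity with the same weight $\lambda$ on both arguments; this is exactly where the assumption $\lambda + \mu \ge 1$ enters (to keep the coefficients of $A$ nonnegative). After that, the remainder is a routine two-step application of standard convexity properties of KL. A data-processing approach (viewing $P', Q'$ as the marginals of a common channel applied to two priors on $\{P, Q\}$) yields only a Bernoulli-KL upper bound and does not directly recover $\kl(P \Vert Q)$, so I would avoid that route.
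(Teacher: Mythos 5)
Your proof is correct and, in fact, more careful than the paper's own argument. The paper also reasons via convexity of the KL divergence, but applies it in each argument \emph{separately}: it first shows $\kl(P \Vert Q') = \kl(P \Vert \mu Q + (1-\mu)P) \le \mu\,\kl(P\Vert Q)$ by convexity in the second argument, and then asserts that ``a similar reasoning for the first argument completes the proof.'' That second step is where the paper's sketch is incomplete: expanding $\kl(P'\Vert Q')$ by convexity in the first argument leaves a term $(1-\lambda)\,\kl(Q\Vert Q') \le (1-\lambda)(1-\mu)\,\kl(Q\Vert P)$, which is not controlled by $\kl(P\Vert Q)$, and the two one-argument bounds cannot simply be chained. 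Your preliminary observation pinpoints why: at $\lambda=\mu=0$ the claim reads $\kl(Q\Vert P)\le\kl(P\Vert Q)$, which fails in general, so a condition such as $\lambda+\mu\ge 1$ is genuinely needed (and is the natural regime for mislabeling, where each label is correct more often than not). Your reparameterization $Q'=\lambda A+(1-\lambda)Q$ with $A=\beta P+(1-\beta)Q$ and $\beta=(1-\mu)/\lambda$, followed by \emph{joint} convexity with matched outer weights, is exactly the device that closes this gap, and it buys the quantitatively sharper conclusion $\kl(P'\Vert Q')\le(\lambda+\mu-1)\,\kl(P\Vert Q)$. The only loose end is the degenerate case $\lambda=0$, where $\beta$ is undefined; but there $\lambda+\mu\ge1$ forces $\mu=1$, so $P'=Q'=Q$ and the claim is trivial. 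In short: same basic tool (convexity of KL), but a genuinely different and more rigorous deployment of it, together with a correct diagnosis that the proposition as stated requires an additional hypothesis.
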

\begin{proof}
The proof follows from the convexity of the KL divergence in both its arguments. Indeed, we have,
\[
\kl(P \Vert \mu Q + (1-\mu)P)
\le \mu \, \kl(P \vert Q) + (1-\mu) \,  \kl(P \Vert P) \le \kl(P \vert Q) \,,
\]
since $0\le \mu \le 1$ and $\kl(P \Vert P) = 0$. A similar reasoning for the first argument of the KL divergence completes the proof.
\end{proof}

\myparagraph{Preprocessing}
Before fine-tuning models on these datasets we perform the following preprocessing steps to make them suitable for evaluating user inference.
\begin{enumerate}
    \item We filter out users with fewer than a minimum number of samples ($20$, $100$, $30$, and $150$ samples for ArXiv, Reddit, CC News, and Enron respectively). These thresholds were selected prior to any experiments to balance the following considerations: (1) each user must have enough data to provide the attacker with enough samples to make user inference feasible and (2) the filtering should not remove so many users that the fine-tuning dataset becomes too small. The summary statistics of each dataset after filtering are shown in Table~\ref{tab:datasets}.
    \item We reserve $10\%$ of the data for validation and test sets
    \item We split the remaining $90\%$ of samples into a held-in set and held-out set, each containing half of the users. The held-in set is used for fine-tuning models and the held-out set is used for attack evaluation.
    \item For each user in the held-in and held-out sets, we reserve $10\%$ of the samples as the attacker's knowledge about each user. These samples are never used for fine-tuning. 

\end{enumerate}

\myparagraph{Target Models}
We evaluate user inference attacks on the $125$M and $1.3$B parameter models from the GPT-Neo~\cite{gpt-neo} model suite.
For each experiment, we fine-tune all parameters of these models for $10$ epochs. We use the the Adam optimizer \cite{kingma2017adam} with a learning rate of $5\times10^{-5}$, a linearly decaying learning rate schedule with a warmup period of $200$ steps, and a batch size of $8$. After training, we select the checkpoint achieving the minimum loss on validation data from the users held in to training, and use this checkpoint to evaluate user inference attacks.

We train models on servers with one NVIDIA A100 GPU and $256$ GB of memory. Each fine-tuning run took approximately $16$ hours to complete for GPT-Neo $125$M and $100$ hours for GPT-Neo $1.3$B.

\myparagraph{Attack Evaluation}
We evaluate attacks by computing the attack statistic from Section~\ref{sec:attack} for each held-in user that contributed data to the fine-tuning dataset, as well as the remaining held-out set of users. With these user-level statistics, we compute a Receiver Operating Characteristic (ROC) curve and report the area under this curve (AUROC) as our metric of attack performance. This metric has been used recently to evaluate the performance of membership inference attacks \citet{LiRA}, and it provides a full spectrum of the attack effectiveness (True Positive Rates at fixed False Positive Rates). By reporting the AUROC, we do not need to select a threshold $\tau$ for our attack statistic, but rather we report the aggregate performance of the attack across all possible thresholds. 

\subsection{Canary User Construction}
\label{sec:a:expt:canary}

We evaluate worst-case risk of user inference by injecting synthetic canary users into the fine-tuning data from CC News, ArXiv Abstracts, and Reddit Comments. These canaries were constructed by taking real users and replicating a shared substring in all of that user's examples. This construction is meant to create canary users that are both realistic (i.e. not substantially outlying compared to the true user population) but also easy to perform user inference on. The algorithm used to construct canaries is shown in \Cref{alg:canary}.

\begin{algorithm}
\caption{Synthetic canary user construction}\label{alg:canary}
\begin{algorithmic}
\Require Substring lengths $L = [l_1, \dots l_n]$,
canaries per substring length $N$, set of real users $U_R$
\Ensure Set of canary users $U_C$
\State $U_C \gets \emptyset$
\For{$l$ in $L$}
    \For{$i$ up to $N$}
        \State Uniformly sample user $u$ from $U_R$
        \State Uniformly sample example $x$ from $u$'s data
        \State Uniformly sample $l$-token substring $s$ from $x$
        \State $u_c \gets \emptyset$ \Comment{Initialize canary user with no data}
        \For{$x$ in $u$}
            \State $x_c \gets \text{InsertSubstringAtRandomLocation}(x, s)$
            \State Add example $x_c$ to user $u_c$
        \EndFor
        \State Add user $u_c$ to $U_C$
        \State Remove user $u$ from $U_R$

    \EndFor
\EndFor
\end{algorithmic}
\end{algorithm}

\subsection{Mitigation Definitions}
\label{sec:a:expt:mitigation}

In \Cref{sec:results} we explore heuristics for mitigating privacy attacks.
We give precise definitions of the batch and per-example gradient clipping.

Batch gradient clipping restricts the norm of a single batch gradient to be at most $C$:
\begin{align*}
    \hat{g}_t = \frac{\min(C, \lVert \nabla_{\theta_t} l(\xv) \rVert)}{\lVert \nabla_{\theta_t} l(\xv) \rVert}  \nabla_{\theta_t} l(\xv) \,.
\end{align*}

Per-example gradient clipping restricts the norm of a single example's gradient to be at most $C$ before aggregating the gradients into a batch gradient: 
\begin{align*}
    \hat{g}_t = \sum_{i=1}^n \frac{\min(C, \lVert \nabla_{\theta_t} l(\xv^{(i)}) \rVert)}{\lVert \nabla_{\theta_t} l(\xv^{(i)}) \rVert}   \nabla_{\theta_t} l(\xv^{(i)}) \,.
\end{align*}

The batch or per-example clipped gradient $\hat{g}_t$, is then passed to the optimizer as if it were the true gradient.

For all experiments involving gradient clipping, we selected the clipping norm, $C$, by recording the gradient norms during a standard training run and setting $C$ to the minimum gradient norm. In practice this resulted in clipping nearly all batch/per-example gradients during training.

\subsection{Example-Level Differential Privacy: Hyperparameter Tuning}
\label{appendix:dp-tuning}

We now describe the hyperparameter tuning strategy for the example-level DP experiments reported in \Cref{tab:enron_dp}.
Broadly, we follow the guidelines outlined by \citet{ponomareva2023dp}.
Specifically, the tuning procedure is as follows:

\begin{itemize}[nosep]
    \item The Enron dataset has $n=41000$ examples from held-in users used for training. The Non-private training of reaches its best validation loss in about $3$ epochs or $T=15K$ steps. We keep this fixed for the batch size tuning.
    \item \textbf{Tuning the batch size}: For each privacy budget $\eps$ and batch size $b$, we obtain the noise multiplier $\sigma$ such that the private sum $\sum_{i=1}^b g_i + \mathcal{N}(0, \sigma^2)$ repeated $T$ times (one for each step of training) is $(\eps, \delta)$-DP, assuming that each $\|g_i\|_2 \le 1$. The noise scale per average gradient is then $\sigma / \sqrt{b}$. This is the \textbf{inverse signal-to-noise ratio} and is plotted in \Cref{fig:dp:snr}. 
    
    We fix a batch size of $1024$ as the curves flatten out by this point for all the values of $\eps$ considered. See also \cite[Fig. 1]{ponomareva2023dp}.

    \item \textbf{Tuning the number of steps}: Now that we fixed the batch size, we train for as many steps as possible in a 24 hour time limit (this is $12\times$ more expensive than non-private training). Note that DP training is slower due to the need to calculate per-example gradients. This turns out to be around 50 epochs or 1200 steps.
    
    \item \textbf{Tuning the learning rate}:
    We tune the learning rate while keeping the gradient clipping norm at $C=1.0$ (note that non-private training is not sensitive to the value of gradient clip norm).
    We experiment with different learning rate and pick $3 \times 10^{-4}$ as it has the best validation loss for $\eps=8$ (see \Cref{fig:dp:lr}). We use this learning rate for all values of $\eps$.
\end{itemize}

\begin{figure*}[t]
    \begin{subfigure}[b]{0.45\linewidth}
\centering
    \includegraphics[width=\linewidth]{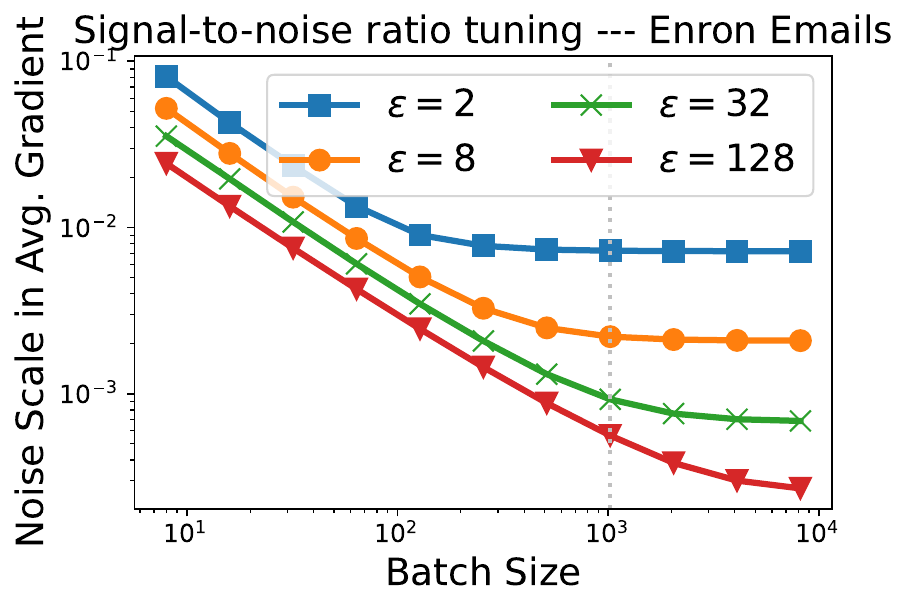}
    \caption{\small The scale of the noise added to the average gradients.}
    \label{fig:dp:snr}
    \end{subfigure}
    \begin{subfigure}[b]{0.45\linewidth}
\centering
    \includegraphics[width=\linewidth]{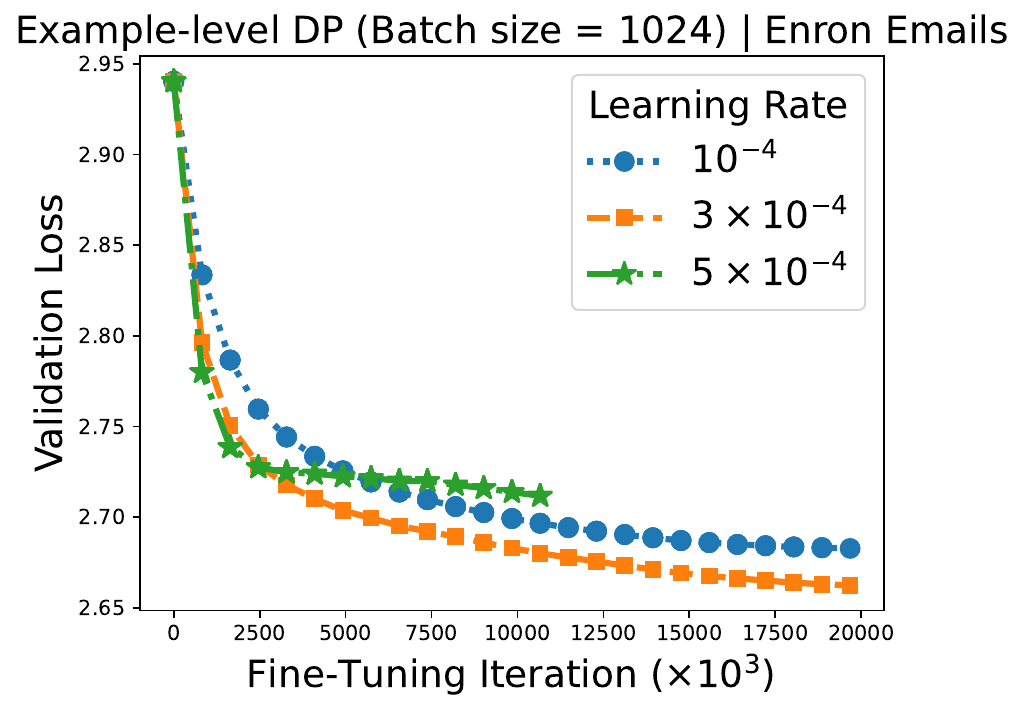}
    \caption{\small Tuning the learning rate with $\eps=8$.}
    \label{fig:dp:lr}
\end{subfigure}
\caption{\small Tuning the parameters for example-level DP on the Enron dataset.}
\end{figure*}

\subsection{Analysis of Duplicates in CC News}
\label{sec:ccnews-dup}

The CC News dataset from HuggingFace Datasets has $708241$ examples, each of which has the following fields: web domain (i.e., the ``user''), the text (i.e. the body of the article), the date of publishing, the article title, and the URL. Each example has a \emph{unique URL}.
However, the text of the articles from a given domain are not all unique. In fact, there only $628801$ articles (i.e., $88.8\%$ of the original dataset) after removing exact text duplicates from a given domain. While all of the duplicates have unique URLs, $43$K out of the identified $80$K duplicates have unique article titles).

We list some examples of exact duplicates below:
\begin{itemize}
    \item \texttt{which.co.uk}: ``We always recommend that before selecting or making any important decisions about a care home you take the time to check that it is right for your or your relative's particular circumstances. Any description and indication of services and facilities on this page have been provided to us by the relevant care home and we cannot take any responsibility for any errors or other inaccuracies. However, please email us on the address you will find on our About us page if you think any of the information on this page is missing and / or incorrect.'' has $3$K duplicates.
    \item \texttt{amarujala.com}: ``Read the latest and breaking Hindi news on amarujala.com. Get live Hindi news about India and the World from politics, sports, bollywood, business, cities, lifestyle, astrology, spirituality, jobs and much more. Register with amarujala.com to get all the latest Hindi news updates as they happen.'' has $2.2$K duplicates.
    \item \texttt{saucey.com}: ``Thank you for submitting a review! Your input is very much appreciated. Share it with your friends so they can enjoy it too!'' has $1$K duplicates.
    \item \texttt{fox.com}: ``Get the new app. Now including FX, National Geographic, and hundreds of movies on all your devices.'' has $0.6$K duplicates.
    \item \texttt{slideshare.net}: ``We use your LinkedIn profile and activity data to personalize ads and to show you more relevant ads. You can change your ad preferences anytime.'' has $0.5$K duplicates.
    \item \texttt{ft.com}: ``\$11.77 per week * Purchase a Newspaper + Premium Digital subscription for \$11.77 per week. You will be billed \$66.30 per month after the trial ends'' has $200$ duplicates.
    \item \texttt{uk.reuters.com}: ``Bank of America to lay off more workers (June 15): Bank of America Corp has begun laying off employees in its operations and technology division, part of the second-largest U.S. bank's plan to cut costs.'' has $52$ copies.
\end{itemize}

As shown in \Cref{fig:ccnews-duplicates}, a small fraction of examples account for a large number of duplicates (the right end of the plot). Most of such examples are typically web scraping errors. Some of the web domains have legitimate news article repetitions, such as the last example above.
In general, these experiments suggest that exact or approximate deduplication for the data contributed by each deduplication is a low cost preprocessing step that can moderately reduce the privacy risks posed by user inference.

\begin{figure*}
    \centering
    \includegraphics[width=0.4\linewidth]{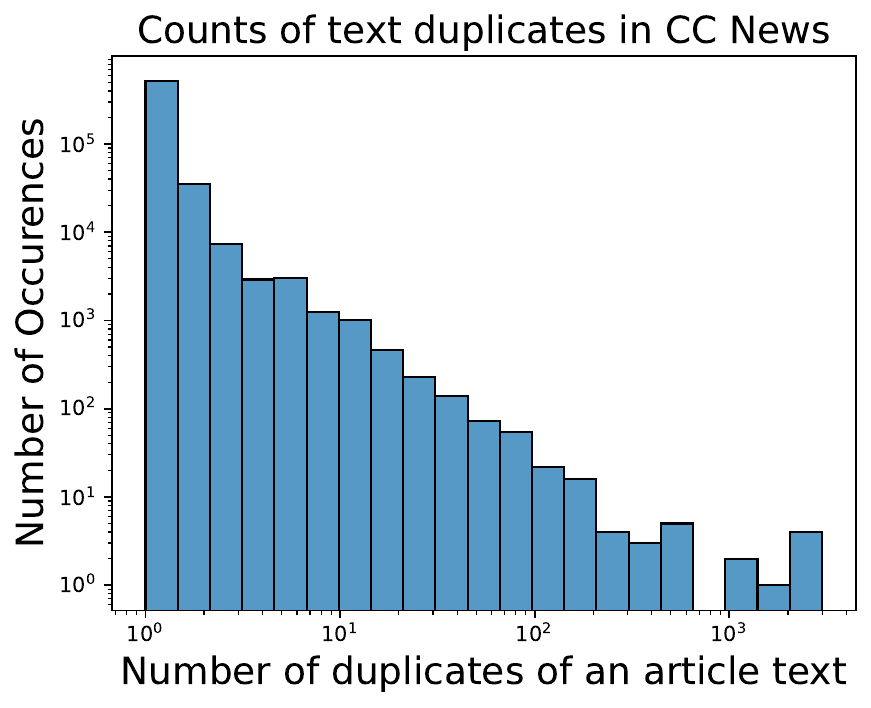}
    \caption{\small Histogram of number of duplicates in CC News. The right side of the plot shows a small number of unique articles have a large number of repetitions.}
    \label{fig:ccnews-duplicates}
\end{figure*}
\section{Additional Experimental Results} 
\label{appendix:expt-results}

We give full results on the ArXiv Abstracts dataset, provide further results for example-level DP,  and
run additional ablations.
Specifically, the outline of the section is:
\begin{itemize}[nosep]
    \item \Cref{sec:a:results:Arxiv}:
    Additional experimental results showing user inference on the ArXiv dataset.
    \item \Cref{sec:a:results:data-size}: Additional experiments on the effect of increasing the dataset size.
    \item \Cref{sec:a:results:tpr-stats}: Tables of TPR statistics at particular values of small FPR.
    \item \Cref{sec:a:results:dp-roc}: ROC curves corresponding to the example-level DP experiment (\Cref{tab:enron_dp}).
    \item \Cref{sec:a:results:ablations}: Additional ablations on the aggregation function and reference model.
\end{itemize}

\begin{figure*}
\begin{subfigure}[b]{\linewidth}
\centering
    \includegraphics[width=0.7\linewidth]{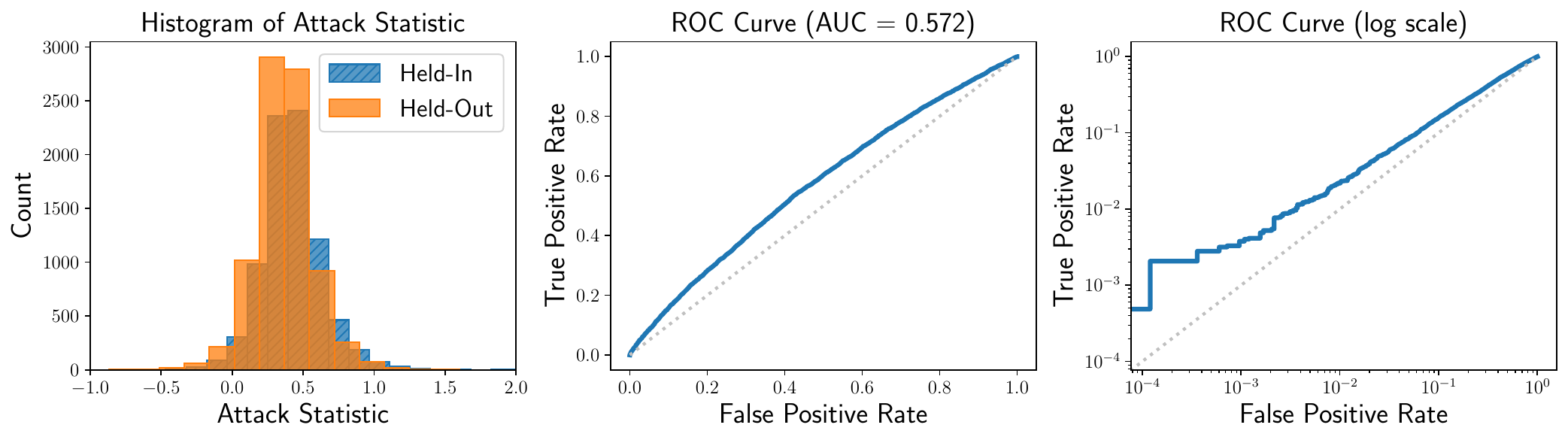}
    \caption{Main attack results (cf. \Cref{fig:hists_aurocs}): histograms of test statistics for held-in and held-out users and ROC curve.}
    \label{fig:arxiv:main}
\end{subfigure}

\begin{subfigure}[b]{0.48\linewidth}
\centering
    \includegraphics[width=\linewidth]{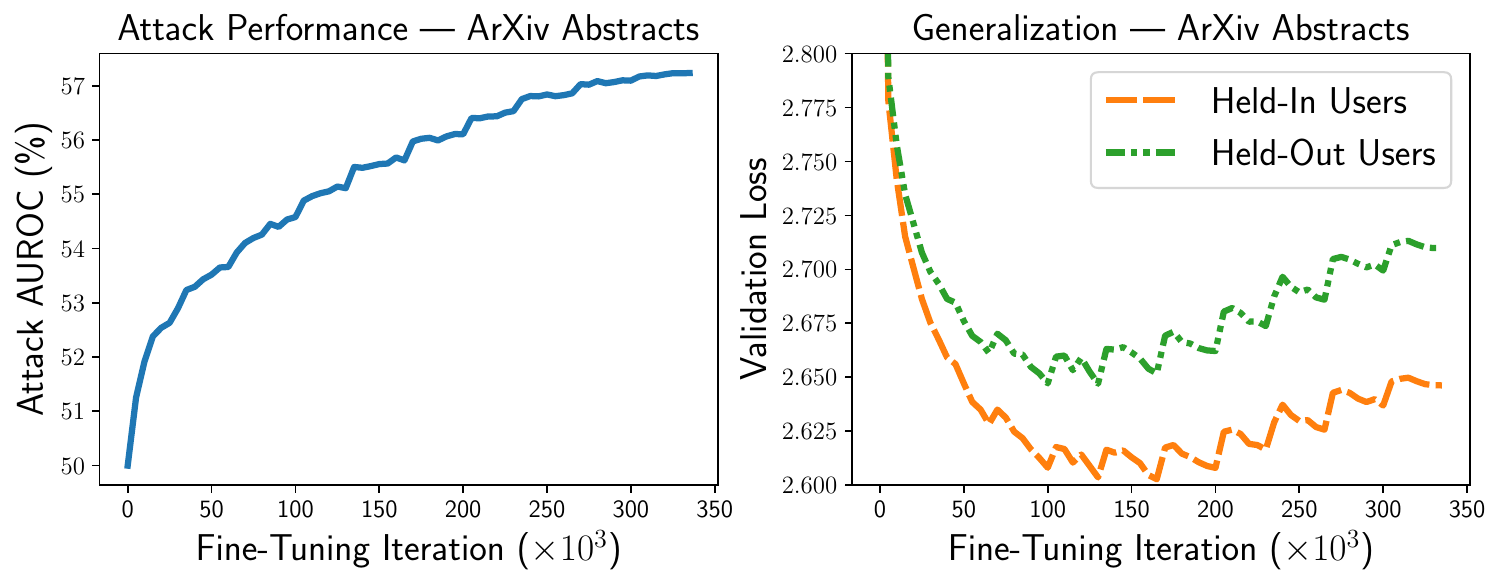}
    \caption{Attack results over the course of training (cf. \Cref{fig:training_run}).}
    \label{fig:arxiv:training}
\end{subfigure}
\begin{subfigure}[b]{0.48\linewidth}
\centering
    \includegraphics[width=\linewidth]{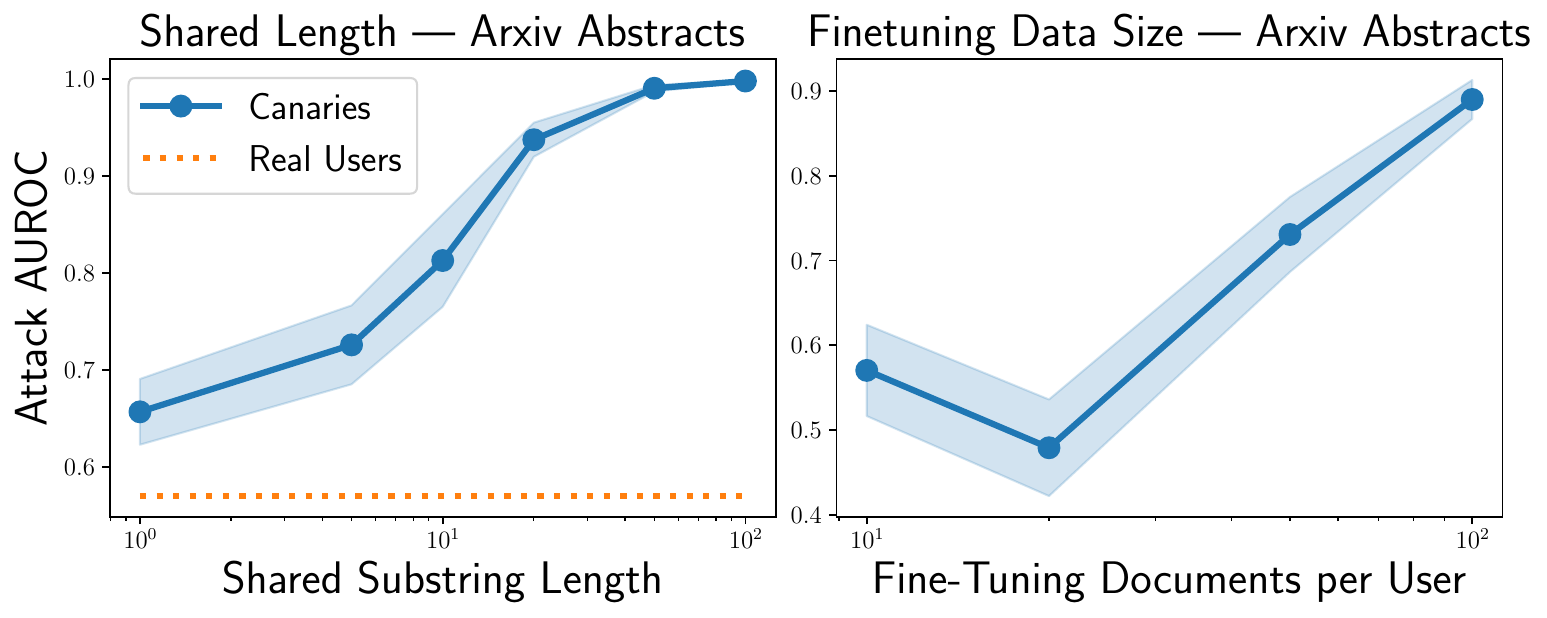}
    \caption{Attack results with canaries (cf. \Cref{fig:canaries}).}
    \label{fig:arxiv:canaries}
\end{subfigure}

\caption{Results on the ArXiv Abstracts dataset.}
\label{fig:arxiv}
\end{figure*}

\subsection{Results on the ArXiv Abstracts Dataset}
\label{sec:a:results:Arxiv}

\Cref{fig:arxiv} shows the results for the ArXiv Abstracts dataset. Broadly, we find that the results are qualitatively similar to those of Reddit Comments and CC News.

Quantitatively, the attack AUROC is $57\%$, in between Reddit ($56\%$) and CC News ($66\%$).
\Cref{fig:arxiv:training} shows the user-level generalization and attack performance for the ArXiv dataset.
The Spearman rank correlation between the user-level generalization gap and the attack AUROC is at least $99.8\%$, which is higher than the $99.4\%$ of CC News (although the trend is not as clear visually).
This reiterates the close relation between user-level overfitting and user inference.
Finally, the results of \Cref{fig:arxiv:canaries} are also nearly identical to those of \Cref{fig:canaries}, reiterating their conclusions.

\subsection{Effect of Increasing the Dataset Size: Reddit}
\label{sec:a:results:data-size}

We now compare the effect increasing the size of the dataset has on user inference. To be precise, we compare the full Reddit dataset that contains 6 months of scraped comments with a smaller version that uses 4 months of data (see \Cref{sec:a:expt:ds} and \Cref{fig:reddit_data_size} for details).

We find in \Cref{fig:reddit_data_size:roc} that increasing the size of the dataset leads to a uniformly smaller ROC curve, including a reduction in AUROC ($60\%$ to $56\%$) and a smaller TPR at various FPR values.

\begin{figure}[bth!]
\centering
\begin{subfigure}[b]{0.465\linewidth}
\centering
\includegraphics[width=\linewidth]{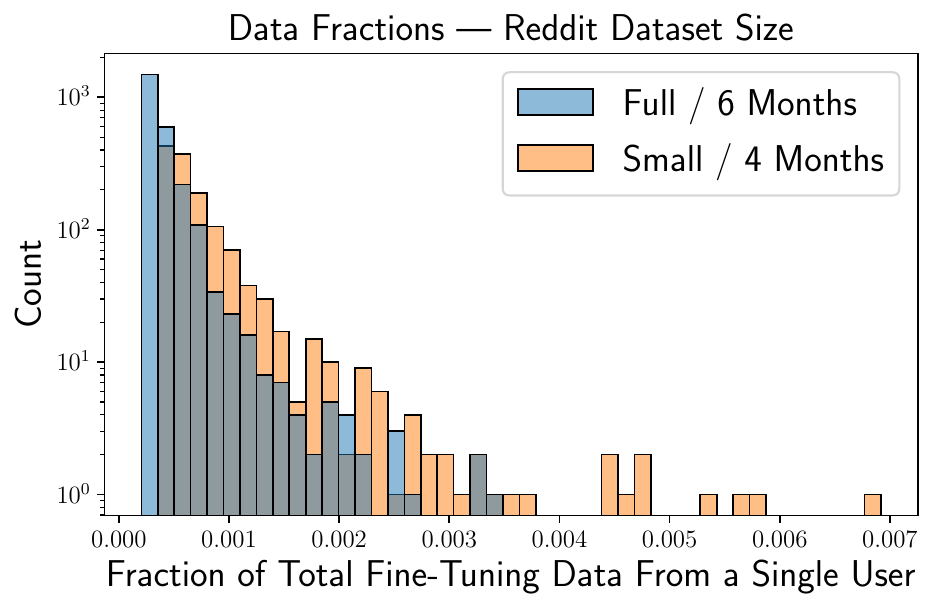}
\caption{\small
    Histogram of fraction of data per user.
} \label{fig:reddit_data_size}
\end{subfigure}
\begin{subfigure}[b]{0.45\linewidth}
\centering
\includegraphics[width=\linewidth]{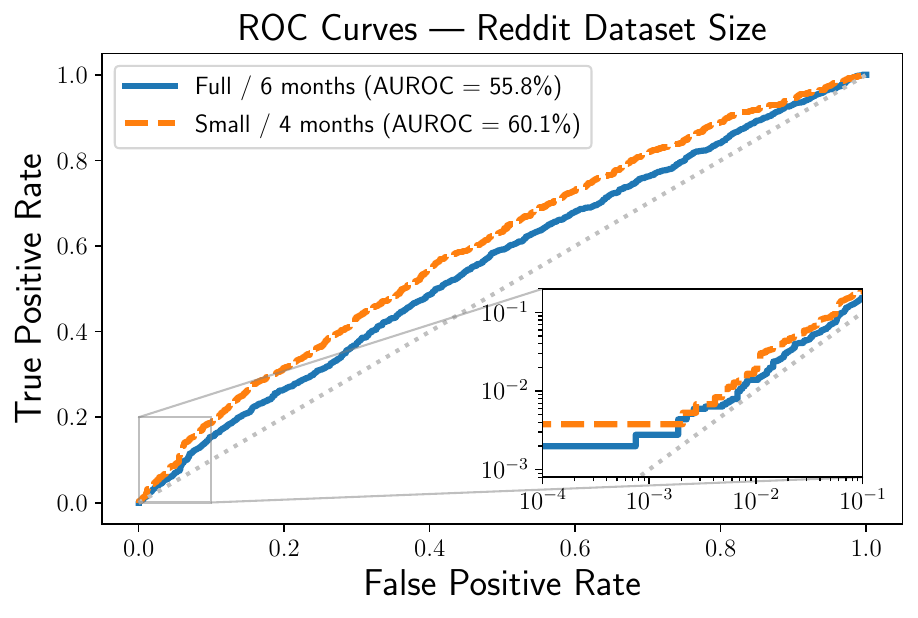}
\caption{\small
The corresponding ROC curves.
}
\label{fig:reddit_data_size:roc}
\end{subfigure}
\caption{Effect of increasing the fraction of data contributed by each user: Since Reddit Full (6 Months) contains more users than Reddit Small (4 Months), each user contributes a smaller fraction of the total fine-tuning dataset. As a result, the user inference attack on Reddit Full is less successful, which agrees with the intuition from \Cref{thm:analysis}.}
\end{figure}

\subsection{Attack TPR at low FPR}
\label{sec:a:results:tpr-stats}

We give some numerical values of the attack TPR and specific low FPR values.

\myparagraph{Main experiment}
While \Cref{fig:hists_aurocs} summarizes the attack performance with the AUROC, we give the attack TPR at particular FPR values in \Cref{tab:tpr_at_small_fpr}.
This result shows that while Enron's AUROC is large, its TPR at FPR$=1\%$ at $4.41\%$ is comparable to the $4.41\%$ of CC News.  However, for FPR$=5\%$, the TPR for Enron jumps to nearly $28\%$, which is much larger than the $11\%$ of CC News.

\begin{table*}
\small
\centering
\renewcommand{\arraystretch}{1.2}
\begin{tabular}{lrrrr}
\toprule
\textbf{FPR \%} &  
\multicolumn{4}{c}{\textbf{TPR\%}}
\\
\cmidrule{2-5}
& \textbf{Reddit} &  \textbf{CC News} &  \textbf{Enron} &  \textbf{ArXiv} \\
\midrule
$0.1$  &   $0.28$ &   $1.18$ &   N/A &   $0.38$ \\
$0.5$  &   $0.67$ &   $2.76$ &   N/A &   $1.31$ \\
$1$  &   $1.47$ &   $4.33$ &   $4.41$ &   $2.24$ \\
$5$  &   $7.05$ &  $11.02$ &  $27.94$ &   $8.44$ \\
$10$ &  $15.45$ &  $18.27$ &  $57.35$ &  $15.77$ \\
\bottomrule
\end{tabular}
\caption{\small Attack TPR at small FPR values corresponding to \Cref{fig:hists_aurocs}.}
\label{tab:tpr_at_small_fpr}
\end{table*}

\myparagraph{CC News Deduplication}
The TPR statistics at low FPR are given in \Cref{tab:ccnews-dedup}.

\begin{table*}
\centering
\renewcommand{\arraystretch}{1.2}
\begin{tabular}{lrrrrrr}
\toprule
\textbf{CC News Variant} &  \textbf{AUROC \%} & 
\multicolumn{5}{c}{\textbf{TPR\% at FPR $=$ }} \\
\cmidrule{3-7}
& & $0.1\%$ &    \textbf{$0.5\%$} &      \textbf{$1\%$} &       \textbf{$5\%$} &      \textbf{$10\%$} \\
\midrule
\textbf{Original} &  65.73 &  1.18 &  2.76 &  4.33 &  11.02 &  18.27 \\
\textbf{Deduplicated}    &  59.08 &  0.58 &  1.00 &  1.75 &   7.32 &  11.31 \\
\bottomrule
\end{tabular}
\caption{\small Effect of within-user deduplication: Attack TPR at small FPR values corresponding to \Cref{fig:dedup}.}
\label{tab:ccnews-dedup}
\end{table*}

\subsection{ROC Curves for Example-Level Differential Privacy}
\label{sec:a:results:dp-roc}

The ROC curves corresponding to the example-level differential privacy is given in \Cref{fig:dp-roc}. The ROC curves reveal that while example-level differential privacy (DP) reduces the attack AUROC, we find that the TPR at low FPR remains unchanged. In particular, for FPR $=3\%$, we have TPR $=6\%$ for the non-private version but TPR $=10\%$ for $\varepsilon = 32$. This shows that example-level DP is ineffective at fully thwarting the risk of user inference.

\begin{figure*}
\centering
    \includegraphics[width=0.75\linewidth]{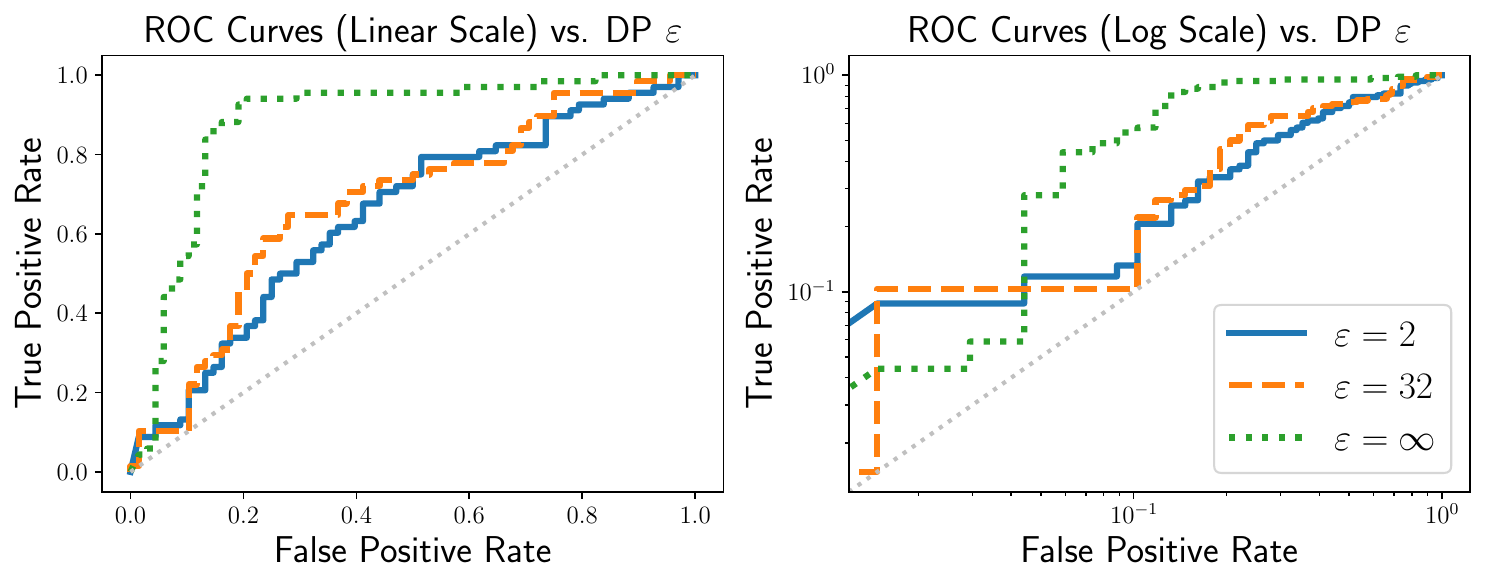}
    \caption{ROC curves (linear and log scale) for the example-level differential privacy on the Enron Emails dataset.}
    \label{fig:dp-roc}
\end{figure*}

\subsection{Additional Ablations} 
\label{sec:a:results:ablations}

The user inference attacks implemented in the main paper use the pre-trained LLM as a reference model and compute the attack statistic as a mean of log-likelihood ratios described in Section~\ref{sec:attack}. In this section, we study different choices of reference model and different methods of aggregating example-level log-likelihood ratios. For each of the attack evaluation datasets, we test different choices of reference model and aggregation function for performing user inference on a fine-tuned GPT-Neo $125$M model.

In \Cref{tab:aggregation} we test three methods of aggregating example-level statistics and find that averaging the log-likelihood ratio outperforms using the minimum or maximum per-example ratio. Additionally, in \Cref{tab:reference_model} we find that using the pre-trained GPT-Neo model as the reference model outperforms using an independently trained model of equivalent size, such as OPT~\cite{zhang2022opt} or GPT-2~\cite{Radford2019LanguageMA}. However, in the case that an attacker does not know or have access to the pre-trained model, using an independently trained LLM as a reference still yields strong attack performance.

\begin{table}[H]
\centering
\renewcommand{\arraystretch}{1.2}
\small
\begin{tabular}{crrrr}
\toprule
\begin{tabular}{c}\textbf{Attack Statistic} \\ \textbf{Aggregation} \end{tabular}        & 
\textbf{Reddit Comments} &
\textbf{ArXiv Abstracts} & 
\textbf{CC News} & 
\textbf{Enron Emails} 
\\ \hline
Mean & 
$\tabemph{} \mathbf{56.0 \pm 0.7}$
    &
$\tabemph{} \mathbf{57.2 \pm 0.4}$       & 
$\tabemph{} \mathbf{65.7 \pm 1.1}$      & 
$\tabemph{} \mathbf{87.3 \pm 3.3}$                      
\\
Max          & 
$54.5 \pm 0.8$ & 
$56.7 \pm 0.4$      & 
$62.1 \pm 1.1$               & 
$71.1 \pm 4.0$               
\\
Min          & 
$54.6 \pm 0.8$ & 
$55.3 \pm 0.4$               & 
$63.3 \pm 1.0$                & 
$57.9 \pm 4.0$     
\\                    
\bottomrule
\end{tabular}
\caption{\small 
\textbf{Attack statistic design}:
We compare the default mean aggregation of per-document statistics $\log(p_\theta(\xv\pow{i}) / \pRef(\xv\pow{i}))$ in the attack statistic (\Cref{sec:attack}) with the min/max over documents $i=1, \ldots, m$.
We show the mean and std AUROC over 100 bootstrap samples of the held-in and held-out users.
} \label{tab:aggregation}
\end{table}
\begin{table}[H]
\centering
\renewcommand{\arraystretch}{1.2}
\small
\begin{tabular}{lrrr} 
\toprule
\textbf{Reference Model} & 
\textbf{ArXiv Abstracts}  & 
\textbf{CC News} & 
\textbf{Enron Emails}
\\ \hline
GPT-Neo 125M$^*$    & 
$\tabemph{} \mathbf{57.2 \pm 0.4}$              &
$\tabemph{}\mathbf{65.8 \pm 1.1}$              &
$\tabemph{}\mathbf{87.8 \pm 3.5}$                      
\\
GPT-2 124M    & 
$53.1 \pm 0.5$                             & 
$\tabemph{}\mathbf{65.7 \pm 1.2}$                       & 
$74.1 \pm 4.5$                                
\\
OPT 125M        & 
$53.7 \pm 0.5$                            & 
$62.0 \pm 1.2$                        & 
$77.9 \pm 4.2$  
\\ \bottomrule
\end{tabular}
\caption{\small \textbf{Effect of the reference model}: We show the user inference attack AUROC $(\%)$ for different choices of the reference model $\pRef$, including the pretrained model $p_{\theta_0}$ (GPT-Neo 125M, denoted by $^*$).
We show the mean and std AUROC over 100 bootstrap samples of the held-in and held-out users.
} \label{tab:reference_model}
\end{table}

\section{Discussion on User-Level DP}
\label{appendix:user-level-dp}

Differential privacy (DP) at the user-level gives quantitative and provable guarantees that the presence or absence of \emph{one user's data} is indistinguishable. Concretely, a training procedure is \textbf{$(\eps, \delta)$-DP at the user level} if the model $p_{\theta}$ trained on the data from set $U$ of users and a model $p_{\theta, u}$ trained on data from users $U \cup \{u\}$ satisfies
\begin{align} \label{eq:user-level-dp}
 \prob(p_\theta \in A) \le \exp(\eps) \, \prob(p_{\theta, u} \in A) + \delta \,,
\end{align}
and analogously with $p_\theta$, $p_{\theta, u}$ interchanged, for any outcome set $A$ of models, any user $u$ and any $U$ of users.
Here, $\eps$ is known as the privacy budget and a smaller value of $\eps$ denotes greater privacy.

In practice, this involves ``clipping'' the user-level contribution and adding noise calibrated to the privacy level~\cite{mcmahan2017learning}.

\myparagraph{The promise of user-level DP}
User-level DP is the strongest form of protection against user inference. For instance, suppose we take
\[
    A = \left\{\theta \, :\, \frac{1}{m} \sum_{i=1}^m \log\left( \frac{p_\theta(\xv\pow{i})}{\pRef(\xv\pow{i})} \right) \le \tau \right\}
\]
to be set of all models whose test statistic calculated on $\xv\pow{1:m} \sim \Puser_u^m$ is at most some threshold $\tau$. Then, the user-level DP guarantee \eqref{eq:user-level-dp} says that the test statistic between $p_{\theta}$ and $p_{\theta, u}$ are nearly indistinguishable (in the sense of \eqref{eq:user-level-dp}). In other words, the attack AUROC is provably bounded as function of the parameters $(\eps, \delta)$~\cite{kairouz2015composition}. 

User-level DP has successfully been deployed on industrial applications with user data~\cite{ramaswamy2020training,xu2023federated}. However, these applications are in the context of federated learning with small on-device models.

\myparagraph{The challenges of user-level DP}
While user-level DP is a natural solution to mitigate user inference, it involves several challenges, including fundamental dataset sizes, software/systems challenges, and a lack of understanding of empirical tradeoffs.

First, user-level DP can lead to a major drop in performance, especially if the number of users in the fine-tuning dataset is not very large. For instance, the Enron dataset with $O(150)$ users is definitely too small while CC news with $O(3000)$ users is still on the smaller side. It is common for studies on user-level DP to use datasets with $O(100K)$ users. For instance, the Stack Overflow dataset, previously used in the user-level DP literature, has around $350K$ users~\cite{kairouz2021practical}.

Second, user-aware training schemes including user-level DP and user-level clipping, require sophisticated user-sampling schemes. For instance, we may require operations of the form ``sample 4 users and return 2 samples from each''. On the software side, this requires fast per-user data loaders, which are not supported by standard training workflows, which are oblivious to the user-level structure in the data.

Third, user-level DP also requires careful accounting of user contributions per round and balancing user contributions per-round and the number of user participations over all rounds. The trade-offs involved here are not well-studied, and require a detailed investigation.

Finally, existing approaches require the datasets to be partitioned into disjoint user data subsets. Unfortunately, this is not always true in applications such as email threads (where multiple users contribute to the same thread) or collaborative documents. The ArXiv Abstracts dataset suffers from this latter issue as well. This is a promising direction for future work.

\myparagraph{Summary}
In summary, the experimental results we presented make a strong case for user-level DP at the LLM scale. Indeed, our results motivate the separate future research question on how to effectively apply user-level DP given accuracy and compute constraints.

\end{document}